\def\maxwidth{ %
  \ifdim\Gin@nat@width>\linewidth
    \linewidth
  \else
    \Gin@nat@width
  \fi
}
\definecolor{fgcolor}{rgb}{0.345, 0.345, 0.345}
\definecolor{shadecolor}{rgb}{.97, .97, .97}
\definecolor{messagecolor}{rgb}{0, 0, 0}
\definecolor{warningcolor}{rgb}{1, 0, 1}
\definecolor{errorcolor}{rgb}{1, 0, 0}
\newenvironment{knitrout}{}{} 
\newcommand{\citeasnoun}{\citet}
\renewcommand{\cite}{\citep}
\newcommand{\code}[1]{{\tt #1}}
\newtheorem{theorem}{Theorem}
\newtheorem{lemma}[theorem]{Lemma}
\theoremstyle{definition}
\newtheorem{definition}[theorem]{Definition}
\newtheorem{observation}[theorem]{Observation}
\newtheorem{procedure}[theorem]{Procedure}
\renewcommand{\S}{\mathcal{S}}
\newcommand{\C}{\mathcal{C}}
\newcommand{\K}{\mathcal{K}}
\title{Implementing a Metropolis sampler on decomposable graphs using 
a variety of ways to represent the graph}
\author{Alun Thomas
\footnote{295 Chipeta Way, Salt Lake City UT 84108, USA.
alun.thomas$@$utah.edu.}
\\
Division of Epidemiology\\
Department of Internal Medicine\\
University of Utah\\
}
\begin{document}

\maketitle

\section*{Abstract}

We describe the implementation of the Giudici-Green Metropolis
sampling method for decomposable graphs using a variety of 
structures to represent the graph. These comprise
the graph itself, the Junction tree, the Almond tree and
the Ibarra clique-separator graph. For each structure, we describe
the process for ascertaining whether adding or deleting a specific
edge results in a new graph that is also decomposable, and the 
updates that need to be made to the structure if the edge
perturbation is made.
For the Almond tree and Ibarra graph these procedures are novel.
We find that using the graph itself is 
generally at least competitive in terms of computational
efficiency for a variety of graph distributions,
but note that the other structures may allow and suggest
samplers using different perturbations with lower rejection rates
and/or better mixing properties. The sampler has applications
in estimating graphical models for systems of multivariate Gaussian or
Multinomial variables.

\section*{Keywords}
Junction tree, Almond Tree, Ibarra clique-separator graph,
structural model estimation.

\newpage
\section{Introduction}

\subsection{The Giudici-Green sampler}

\citeasnoun{Giudici+Green:99} introduced a simply stated Metropolis 
\cite{Metropolis+al:53} method for sampling from the space of decomposable graphs
under a specified probability distribution $\pi()$. The method is as follows:
\begin{itemize}
\item
Select at random two distinct elements $x$ and $y$ from the vertex set 
$V$ of the 
decomposable graph $G=G(V,E)$ with edge set $E \subseteq V \times V$.
\item
If $(x,y)$ is an edge of $G$ and disconnecting $x$ and $y$ results in a 
new graph $G^-$ which is decomposable, accept a move to $G^-$,
that is disconnect $x$ and $y$, with probability 
\begin{equation}
\max \left \{ 1, \frac{\pi(G^-)}{\pi(G)} \right \}.
\label{accgminus}
\end{equation}
\item
If $(x,y)$ is not an edge of $G$ and connecting $x$ and $y$ results in a
new graph $G^+$ which is decomposable, accept a move to $G^+$,
that is connect $x$ and $y$, with probability
\begin{equation}
\max \left \{ 1, \frac{\pi(G^+)}{\pi(G)} \right \}.
\label{accgplus}
\end{equation}
\end{itemize}

(Note that in this introduction, in order to facilitate exposition,
we assume that the reader is familiar with
decomposable graphs, cliques, separators, junction trees and so on and,
therefore, anticipate the definitions given below.)

The effectiveness of the algorithm relies on two important features.
The first is that the probability distribution $\pi()$ from with we
sample is required to be of the form
\begin{equation}
\pi(G) = \frac { \prod_{C \in \C} \phi(C)} {\prod_{S \in \S} \phi(S) } 
\label{structmark}
\end{equation}
where $\C$ is the set of cliques of $G$, $\S$ is the collection of
its separators, and $\phi(A)$ is positive valued {\em potential} 
function defined on all vertex sets $A \subseteq V$.
Such distributions were considered by \citeasnoun{Byrne+Dawid:15} and termed
{\em structurally Markov} distributions. 
We follow this requirement here,
although extension to {\em weak structurally Markov} distributions
\cite{Green+Thomas:18} is straightforward.
Under this assumption, computation of the acceptance probabilities
(\ref{accgminus}) and (\ref{accgplus}) 
requires computing $\phi()$ for only 4 vertex sets determined by $x$,
$y$, and the structure of $G$ local to $x$ and $y$. 
Thus, computing the acceptance probabilities
is done in constant time regardless of the size of $V$.

When estimating the graph structure of a graphical model for a 
set of random variables using a 
Bayesian framework, $\pi()$ is 
the posterior distribution on the graph space given a set of multivariate 
observations marginalized over the possible parameter values. 
\citeasnoun{Dawid+Lauritzen:93} show that for a broad class of 
conjugate prior distributions for sets of Gaussian or Multinomial
parameters the marginal graph posterior is structurally Markov whenever the
graph prior is.

The second feature is that we need to determine, as efficiently as possible,
whether or not $G^-$ or $G^+$ is decomposable. 
\citeasnoun{Giudici+Green:99} achieve this by maintaining, in addition 
to the current state of $G$, the current state of a {\em junction tree}
representation of $G$. A junction tree is a tree in which the vertices represent
the cliques of $G$ and the edges represent the separators. While a 
junction tree representation can be readily obtained from $G$, it is
not uniquely determined, however, the algorithm does not depend on the
specific junction tree representation used, and in fact switches between
equivalent representations.
Given this structure we can determine what we will call the 
{\em legality} of a perturbation
to a decomposable graph $G$, that is whether the perturbation results 
in a new graph that is also decomposable, as follows:
\begin{itemize}
\item
If $(x,y)$ is an edge of $G$, disconnecting $(x,y)$ is legal iff
there is a unique clique $C_{xy}$ of $G$ that contains both $x$ and $y$
\cite{Frydenberg+Lauritzen:89}.
\item
If $(x,y)$ is not an edge of $G$, connecting $(x,y)$ is legal iff
there is a clique $C_x \ni x$ and a clique $C_y \ni y$
that are adjacent is some junction tree representation of $G$
\cite{Giudici+Green:99}.
\end{itemize}

\subsection{Other characterizations of decomposable graphs}

The junction tree has proved to be a very effective characterization
of a decomposable graphs both for graph sampling, and hence inference
on graph structure when sampling from a posterior, 
and for propagating information through known graphs using
the {\em forward-backward} methods of graphical modeling
\cite{Lauritzen+Spiegelhalter:88}.
However, other characterizations have been proposed and
considered. These include the {\em Almond tree}
\cite{Almond+Kong:91}, the {\em Ibarra graph} or {\em clique-separator graph}
\cite{Ibarra:09} both of which we review below, and of course there is 
the graph itself usually
represented by a specification of the vertex adjacency sets.
While the the Giudici-Green sampler using the junction tree
has been difficult to beat computationally, there are other reasons to consider
these alternative representations.

\begin{itemize}
\item
The Giudici-Green algorithm proposes and rejects illegal perturbations 
which raises the question of whether we can limit proposals to only 
legal moves and whether a different representation makes this easier.
\item
Some structural queries may be easier with other representations. For instance,
determining whether a set of vertices is a separator of $G$ is equivalent
to asking whether it is an edge of a junction tree. However, it is also
equivalent to asking whether the set is a vertex of either the 
Almond tree or the Ibarra graph which may be easier to determine.
\item
Having a different structure to represent the graph raises the possibility
of perturbing the structure rather than the graph itself in a sampling
scheme. That is, can we sample
the space of representations of decomposable graphs rather than space
of graphs? For instance, \citeasnoun{Green+Thomas:13} described sampling
decomposable graphs using a Markov chain on junction trees. Other 
representations may suggest other perturbations that could improve
mixing properties.
\item
Because junction tree representations are not unique,
the \citeasnoun{Green+Thomas:13} junction tree sampler
requires the enumeration of the junction tree representations of a graph
which is a substantial computation.
Similarly, Almond trees are not uniquely determined, however, Ibarra
graphs are. Thus, a sampler on Ibarra graphs could potentially be simpler
and faster
as it would not be necessary to enumerate and 
correct for multiple representations.
\item
Sampler mixing properties might also be improved by proposing
multiple edge changes in a single perturbation. For example,
\citeasnoun{Green+Thomas:13}, also described a multi edge junction 
tree sampler.
Could other structures make determining the legality of multiple edge
perturbations more efficient?
\item
In large graphs it may be possible to make multiple local perturbations
independently and simultaneously and hence speed up a sampler using 
multiple processors in parallel. Are there structures that allow the 
identification of perturbations that can be validly made in parallel?
\end{itemize}

While these issues are active research questions, to date only the
junction tree representation has been used for sampling decomposable
graphs.
The work presented here describes how to determine the legality of 
a single edge perturbation using a each of these representations and how
the data structures need to be updated when a perturbation is accepted.
These we view as necessary first steps to fully investigating and exploiting
these alternative approaches.
We begin by describing the process using only the graph itself and follow
by reviewing the sampler as described by
\citeasnoun{Giudici+Green:99}.
We then consider the Almond tree the 
definition of which we modify slightly to make the edges directed.
Finally we deal with the Ibarra graph which was originally 
defined to have some directed
and some undirected edges, but which we again modify slightly to give
direction to all edges.

We believe that the formulations that allow the Metropolis sampling scheme to be
done with reference to the graph itself, the Almond tree and the Ibarra
graph are novel.

\clearpage
\section{Preliminaries}

\subsection{Decomposable graphs}

For a more complete treatment of graphical modeling and decomposable graphs
see \citeasnoun{Lauritzen:96}.

Let $G=G(V,E)$ be an undirected graph with vertex set $V$ and
edge set $E\subseteq V \times V$. A subset $U \subseteq V$ defines
an {\em induced subgraph} of $G$ with vertex set $U$ and edge
set $E \cap U \times U$. The induced subgraph of $U$ is
{\em complete} if
its edge set is $U \times U$.
$U$ is a {\em clique} if it is maximally complete, that is,
it induces a complete subgraph in $G$ and there
is no subset $W$ such that $U \subset W \subseteq V$ that also induces
a complete subgraph of $G$.
The set of all cliques of $G$ is denoted by $\C$.

\begin{definition}[Running intersection property]
A graph $G$ has the {\em running intersection property} if its cliques can
be ordered
$(C_1, C_2, \ldots C_c)$ such that
\begin{equation}
\mbox{if} \ \ S_i = C_i \cap \bigcup_{j=1}^{i-1} C_j
\ \ \mbox{then} \ \
S_i \subset C_k
\ \ \mbox{for some} \ \
k < i.
\label{running}
\end{equation}
The collection of sets $\S = \{ S_i \}$ are called the {\em separators} of $G$.
\end{definition}
A clique ordering that has this property is called a {\em perfect ordering}
and while $G$ will typically have many perfect orderings,
both the set of cliques and the collection of separators are uniquely
defined. Because of maximality, the cliques must be distinct, however,
in general the separators are not. The number of times a set appears as
a separator is called its {\em multiplicity}.

\begin{definition}[Decomposable graph]
A graph is {\em decomposable} if and only if it has the running intersection
property.
\end{definition}
In what follows the graphs are decomposable unless stated otherwise.
Figure \ref{rawpic} gives an example of a decomposable graph
from \citeasnoun{Thomas+Green:09}.

\begin{figure}[htb]
\caption{An example of a decomposable graph. \label{rawpic}}
\bigskip
\centerline{\includegraphics[width=5.5in]{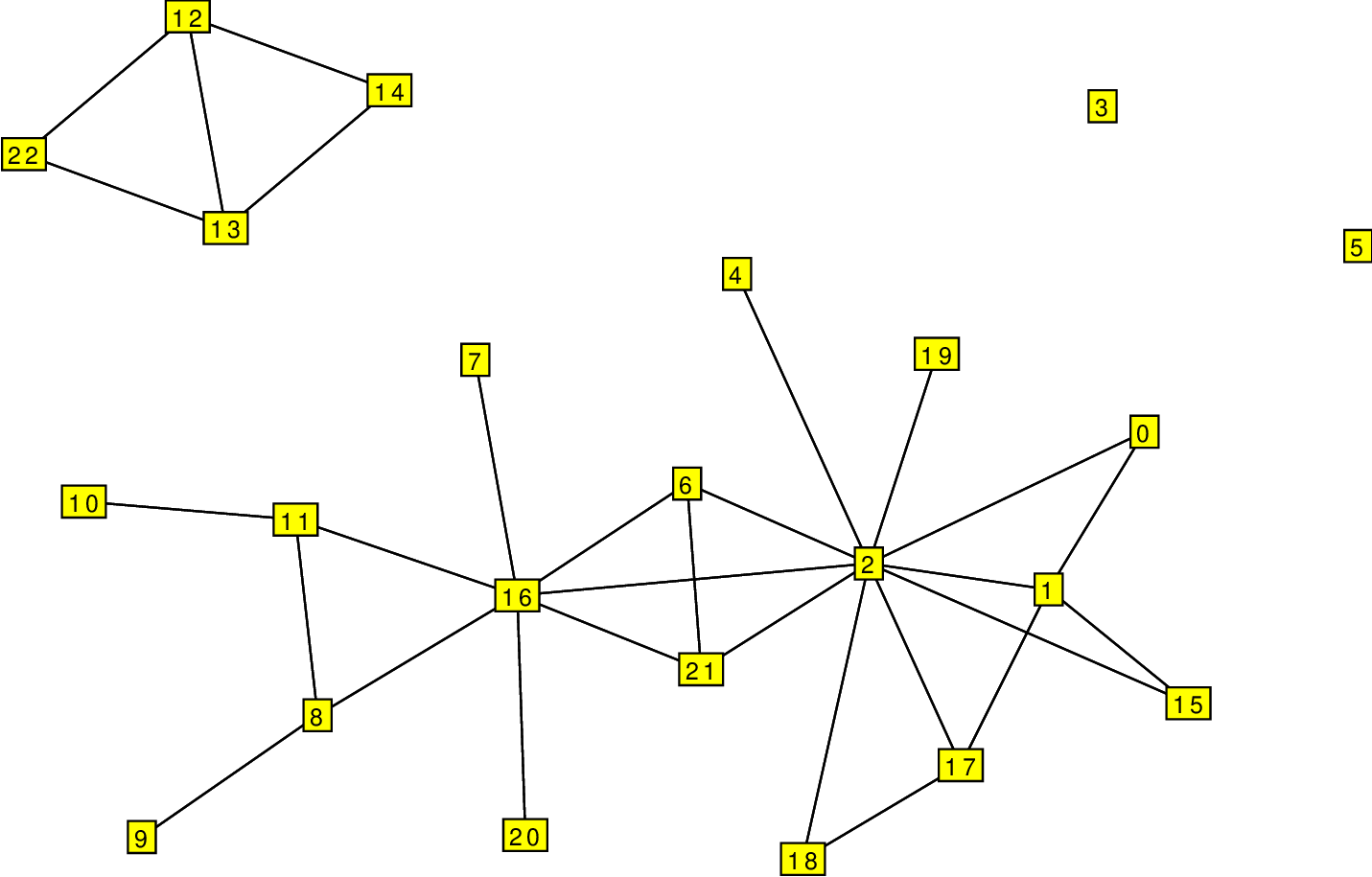}}
\end{figure}

Related to the perfect clique order is the {\em vertex elimination}
order, that is, an ordering of the vertices with the property that
the intersection of the adjacency set of a vertex with the vertices 
that follow it
in order is complete. A vertex elimination order also exists iff the graph is 
decomposable. 
Given a vertex elimination order it is straightforward to specify
an {\em edge elimination} order, that is, an order of the edges such that
at each stage, when an edge is removed, the resulting graph 
is decomposable.
We can, therefore, construct any decomposable graph by
adding its edges to the empty graph in reverse elimination order using
only legal moves.
Since we will describe below how to update each representation of 
a decomposable graph when a legal edge is added, we do not also give
other explicit methods for their construction, however, note that
\citeasnoun{Tarjan+Yannakakis:84} and \citeasnoun{Ibarra:09} give 
such methods for junction trees and Ibarra graphs respectively.
Instead, we need only specify in addition how to represent 
the empty graph, which is always decomposable, using each characterization.

\begin{definition}
\mbox{ }\\
\begin{itemize}
\item 
A {\em path} between $x$ and $y$ is a sequence of vertices
$(x = u_1, u_2, \ldots u_n = y)$ such that each $(u_i,u_{i+1})$ is 
an edge in $G$.
The {\em length} of the path is $n-1$, the number of edges in it.
\item
A {\em cycle}  is a path with $x = y$.
\item
A {\em chord} of the cycle 
$(x = u_1, u_2, \ldots u_n = x)$
is an edge $(u_i,u_j)$ that is not in the cycle itself.
A cycle is {\em unchorded} if it has no chords.
\item
A vertex set $A$ is said to {\em separate} vertices $x$ and $y$ in $G$ if 
every path from $x$ to $y$ in $G$ contains at least one vertex in $A$.
\end{itemize}
\end{definition}

A further defining property of decomposable graphs is that they contain
no unchorded cycles of length greater than 3, and decomposable graphs
are therefore sometimes called {\em chordal} or {\em triangulated} graphs.
We will exploit this below to prove results by contradiction, that is, we will
show that, under some circumstances, cycles or 4 or more vertices 
exist and that, hence, a graph is not decomposable.

\subsection{Graph searches}

In what follows we will make much use of searches in a graph to find paths
between vertices, or to establish that no such path
exits. The generic form of the searches we use are as follows:
\begin{procedure}[Generic path search]
\mbox{ }\\
\begin{enumerate}
\item
Identify an initial vertex $a$.
\item
Create an empty queue of vertices and $a$ to it.
Mark $a$ as {\em explored}.
\item
Create a vertex to vertex map to enable backtracking.
\item
Until the queue is empty:
\begin{itemize}
\item
Remove vertex $b$ from the head of the queue and process it.
\item
For each unexplored neighbour $c$ of $b$:
\begin{itemize}
\item 
Mark $c$ as explored.
\item
If $c$ satisfies a continuation criterion, add $c$ to the queue and
map $c$ to $b$ in the backtracking map.
\end{itemize}
\end{itemize}
\end{enumerate}
\end{procedure}

This format can accommodate standard searches such as breadth first, by
using a first-in-first-out queue, or depth first, by using a last-in-first-out
queue.
Junction trees, Almond trees, and Ibarra graphs 
are all graphs in which vertices comprise sets of vertices of the
original decomposable graph. Many of our searches on these structures 
will use a queue of sets that prioritizes
the entries in increasing or decreasing order of size: 
a {\em lightest neighbour first} or {\em heaviest neighbour first} search
respectively.
We will also mention below {\em maximum spanning tree algorithms},
such as Prim's algorithm \cite{Jarnik:30}, which prioritize the search
based on edge properties rather than vertex properties and do not fit
into this generic format, however, these will only be used 
conceptually and do not need to be implemented.

To mark vertices as explored we can maintain the set of explored vertices
and if this is implemented, for example, by a hash map, adding, removing,
and checking for membership can all be done in constant time.

To establish whether a path exists between two vertices, we set one of
these as the initial vertex and continue until either the other is
reached, or the search is exhausted.

If we need a specific path between two vertices, not just to know whether
such a path exists, we can use the map to track back from the 
last vertex reached to starting point. If the direction of the path is
relevant, care should be taken to choose the initial vertex appropriately.

\clearpage
\section{Giudici-Green using the decomposable graph itself}

For any pair of vertices $x,y \in V$ let $S_{xy}$, be the set of their common
neighbours in decomposable $G(V,E)$, that is, 
the intersection of their adjacency sets,
and
let $S_x = S_{xy} \cup \{x\}$,
$S_y = S_{xy} \cup \{y\}$.
$C_{xy} = S_{xy} \cup \{x,y\}$.

\subsection{Disconnections}

A simple condition for the legality of disconnections in terms 
of the common neighbours of the end vertices is well
know \cite{Eriksen:96,Culbertson+al:21} and also lets us obtain
the Metropolis acceptance probability.
A simple proof from first principles is given here.

\begin{theorem}[Legal disconnections]
Any pair of connected vertices $x,y$ can be legally disconnected
iff $S_{xy}$ is complete.
\label{disconnections}
\end{theorem}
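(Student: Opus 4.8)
The plan is to argue directly from the characterization recorded above — that a graph is decomposable precisely when it has no unchorded cycle of length greater than three — rather than invoking the Frydenberg--Lauritzen clique criterion. Write $G^-$ for the graph obtained from $G$ by deleting the edge $(x,y)$; legality of the disconnection is by definition the statement that $G^-$ is decomposable, so the whole question becomes whether deleting $(x,y)$ creates an unchorded cycle of length at least four. The easy direction I would dispatch at once: if $S_{xy}$ is not complete, pick non-adjacent $u,v \in S_{xy}$; then $x,u,y,v,x$ is a four-cycle whose only possible chords are $(x,y)$ and $(u,v)$, and after deleting $(x,y)$ neither is present, so $G^-$ contains an unchorded four-cycle and the disconnection is illegal.

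The heart of the converse is a localisation observation that I would establish next. Suppose $Z$ is an unchorded cycle of length at least four in $G^-$. Every edge of $Z$ lies in $G$, so $Z$ is a cycle of $G$ as well; since $G$ is decomposable, $Z$ must have a chord in $G$, and the only edge of $G$ missing from $G^-$ is $(x,y)$. Any chord of $Z$ other than $(x,y)$ would survive in $G^-$ and contradict the chordlessness of $Z$ there, so $(x,y)$ must be the \emph{unique} chord of $Z$ in $G$. In particular $x$ and $y$ lie on $Z$ and are non-consecutive, so $(x,y)$ cuts $Z$ into two internally disjoint paths $P_1$ and $P_2$ from $x$ to $y$, each of length at least two.

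Now I assume $S_{xy}$ is complete and rule out every such obstruction $Z$, using the chordality of $G$ itself to shorten the two side-paths. Consider the cycle $C_1$ formed by $P_1$ together with the edge $(x,y)$. Any chord of $C_1$ joins two non-consecutive vertices lying within $P_1 \subseteq Z$, hence is also a chord of $Z$ in $G$; but the only chord of $Z$ in $G$ is $(x,y)$, which is an \emph{edge} of $C_1$ and not a chord of it. Thus $C_1$ is chordless in $G$, and decomposability of $G$ forces its length to be at most three, so $P_1$ has length exactly two. The same argument applied to $P_2$ gives $P_2$ length two, so $Z$ is a four-cycle $x, a, y, b$ with $a, b \in S_{xy}$. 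Finally, $Z$ being unchorded in $G^-$ makes $(a,b)$ a non-edge, contradicting completeness of $S_{xy}$. Hence no obstruction exists, $G^-$ is decomposable, and the disconnection is legal.

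I expect the converse to be the main obstacle, and within it the one delicate point is the path-shortening step: one must argue cleanly that a chord of the auxiliary cycle $C_1$ is genuinely a chord of the original cycle $Z$ — so that it is excluded by the uniqueness of $(x,y)$ as a chord — while keeping track of the fact that $(x,y)$ plays the role of an ordinary cycle edge of $C_1$ rather than a chord. Once both side-paths are pinned to length two, the collapse to a four-cycle and the resulting contradiction with the completeness of $S_{xy}$ are immediate.
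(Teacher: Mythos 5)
Your proof is correct and follows essentially the same route as the paper: the forward direction via the unchorded $4$-cycle $(x,u,y,v,x)$ is identical, and your converse is the same contradiction argument, reducing any unchorded cycle of $G^-$ to a $4$-cycle through $x$ and $y$ that violates completeness of $S_{xy}$. Your path-shortening step merely spells out rigorously what the paper compresses into the remark that a $k$-cycle with $k>4$ ``requires more than one edge to chord,'' which is a welcome clarification but not a different argument.
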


\begin{proof}
Let $G$ be a decomposable graph in which $(x,y)$ is an edge and let
$G^-$ be the graph obtained from $G$ by disconnecting $(x,y)$.

If $S_{xy}$ is not complete in $G$ 
$\exists u,v \in S_{xy}$ such that
$u$ and $v$ are not connected in $G$ and, hence, not connected in $G^-$. 
Since they are both common neighbours of $x$ and $y$ in $G$ and $G^-$,
$(x,u,y,v,x)$ is an unchorded 4-cycle in $G^-$ which is, therefore,
not decomposable.

If $S_{xy}$ is complete in $G$, assume for the purpose of
contradiction that $G^-$ is not decomposable. $G^-$ must
then contain an unchorded k-cycle for $k>3$ that is chorded by the 
addition of the edge $(x,y)$.
If $k > 4$, it requires more than one edge to chord the
cycle which contradicts the decomposability of $G$.
So, $G^-$ contains the 4-cycle $(x,u,y,v,x)$ for some vertices $u$ and $v$.
But $u$ and $v$ are then unconnected common neighbours of $x$ and $y$
in $G$ which contradicts the completeness of $S_{xy}$.
\end{proof}

When this condition is met, 
$C_{xy}$ is the unique clique of 
$G$ that contains both $x$ and $y$ that is specified
in the criterion of \cite{Frydenberg+Lauritzen:89}
and it is straightforward to show that these criteria are equivalent.

\begin{definition}[Clique enabling a disconnection]
When disconnecting $x$ and $y$ is a legal move, $C_{xy}$ is a uniquely
defined clique and we say that $C_{xy}$ {\em enables} the disconnection.
\end{definition}
Note that the set of legal disconnections can be partitioned over
their enabling cliques.

\begin{observation}[Acceptance probability for a disconnection] 
If disconnecting $(x,y)$ is a legal move, the appropriate
Metropolis acceptance probability is 
\begin{equation}
\max 
\left \{ 
1, \frac{\phi(S_{x}) \phi(S_y)}{\phi(S_{xy})\phi(C_{xy})} 
\right \}.
\end{equation}
\end{observation}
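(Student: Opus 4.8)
The plan is to reduce the claim to the evaluation of the ratio $\pi(G^-)/\pi(G)$, since by (\ref{accgminus}) the Metropolis acceptance probability for the disconnection is exactly $\max\{1,\pi(G^-)/\pi(G)\}$. Using the structurally Markov form (\ref{structmark}), this ratio is a product of potentials over the cliques and separators that are created and destroyed when the edge $(x,y)$ is removed, so the whole problem becomes one of bookkeeping: I need to identify precisely how $\C$ and $\S$ change under a legal disconnection.

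To carry out that bookkeeping cleanly I would work with a junction tree $T$ of $G$, for which $\pi(G)$ equals the product of $\phi$ over the clique-nodes divided by the product of $\phi$ over the separator-labelled edges. Because the disconnection is legal, $C_{xy}$ is the unique clique of $G$ containing both $x$ and $y$ (Theorem \ref{disconnections} and the remark following it), hence it is a node of $T$. I would then build a new tree $T'$ by splitting the node $C_{xy}$ into the two complete sets $S_x$ and $S_y$, joining them by a single edge carrying the separator $S_{xy}$, and re-attaching each former neighbour $C'$ of $C_{xy}$ to whichever of $S_x,S_y$ contains the old separator $C_{xy}\cap C'$. This re-attachment is always well defined: since $C_{xy}$ is the \emph{unique} clique containing both endpoints, no neighbour $C'$ contains both $x$ and $y$, so $C_{xy}\cap C'$ misses $x$ (whence it lies in $S_y$) or misses $y$ (whence it lies in $S_x$). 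A short check of the junction property then shows that $T'$ represents precisely $G^-$, the only co-occurrence removed being that of $x$ and $y$.

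With $T'$ in hand the computation is almost immediate. For every re-attached edge the separator is unchanged, because $C_{xy}\cap C'$ already omits whichever of $x,y$ is dropped in passing to $S_x$ or $S_y$, so $S_x\cap C' = C_{xy}\cap C'$ and symmetrically $S_y\cap C' = C_{xy}\cap C'$. Comparing the two products, the node $C_{xy}$ of $T$ is replaced in $T'$ by the two nodes $S_x$ and $S_y$, and exactly one new edge, labelled $S_{xy}$, is introduced; every other factor cancels. Hence $\pi(G^-)/\pi(G) = \phi(S_x)\phi(S_y)/(\phi(S_{xy})\phi(C_{xy}))$, and substituting this into $\max\{1,\pi(G^-)/\pi(G)\}$ yields the stated acceptance probability.

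The step I expect to be the main obstacle is justifying that the product over $T'$ really equals $\pi(G^-)$, because $S_x$ or $S_y$ need not be a \emph{maximal} clique of $G^-$: one of them may be absorbed into a neighbouring clique, so that $T'$ is a clique tree carrying a redundant, non-maximal node. To handle this I would prove a contraction lemma: if a node $N$ is a proper subset of an adjacent node $M$, then the edge $N$--$M$ carries the separator $N$, and merging $N$ into $M$ cancels the node factor $\phi(N)$ against the separator factor $\phi(N)$ while leaving every other incident separator unchanged — the last point following from the junction property, which forces $M\cap P = N\cap P$ for any other neighbour $P$ of $N$. Repeated contraction reduces $T'$ to the genuine junction tree of $G^-$ without altering the product, so the product over $T'$ is indeed $\pi(G^-)$ and the argument closes.
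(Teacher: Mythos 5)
Your argument is correct, but it is not the paper's: the paper disposes of this observation with a one-line citation to Giudici and Green (1999), whereas you supply a self-contained derivation. Your route --- split the node $C_{xy}$ of a junction tree into $S_x$ and $S_y$ joined by an edge labelled $S_{xy}$, re-attach each old neighbour $C'$ to whichever of the two contains $C_{xy}\cap C'$, observe that all other node and edge labels are unchanged, and then contract any non-maximal node into an adjacent superset without changing the node-over-edge product --- is sound. The two points you flag as delicate are handled correctly: the re-attachment is well defined precisely because $C_{xy}$ is the unique clique containing both $x$ and $y$, and the contraction lemma's claim that $M\cap P = N\cap P$ for the other neighbours $P$ of a contracted node $N\subset M$ does follow from the junction property (the path from $M$ to $P$ passes through $N$). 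One could add that $S_x\not\subseteq S_y$ and vice versa, so at most two independent contractions are ever needed, and that no clique of $G^-$ arises other than the old cliques minus $C_{xy}$ together with whichever of $S_x,S_y$ is maximal; both are easy. What your approach buys is an actual proof where the paper has none, and it dovetails with the paper's own bookkeeping in Procedure \ref{thegraphitself} (step 4), which records exactly the same update: delete $C_{xy}$ from $\C$, add an instance of $S_{xy}$ to $\S$, and for each of $S_x,S_y$ either promote it to a clique or decrement its separator multiplicity --- each case contributing the same factor $\phi(S_x)$ or $\phi(S_y)$ to the ratio $\pi(G^-)/\pi(G)$. The cost is that you must invoke junction-tree machinery for a statement the paper treats as a known computation on the formula (\ref{structmark}) directly.
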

\begin{proof}
Simple restatement of the result of \citeasnoun{Giudici+Green:99}.
\end{proof}

Constructing $S_{xy}$ and checking for completeness 
is straightforward
using the adjacency sets of the vertices in $G$, but is quadratic in
$|S_{xy}|$ and may be computationally expensive for large dense graphs.

\subsection{Connections}
For legality of connections and the Metropolis acceptance 
probability we have the following novel observation.

\begin{theorem}[Legal connections]
Any pair of unconnected vertices $x,y \in V$ can be legally connected
iff 
$S_{xy}$ separates $x$ and $y$ in $G$.
\end{theorem}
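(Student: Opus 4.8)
The plan is to prove both directions by contradiction, exhibiting an unchorded cycle of length at least four exactly in the style of Theorem~\ref{disconnections}. Write $G^+$ for the graph obtained from $G$ by adding the edge $(x,y)$; legality of the connection means precisely that $G^+$ is decomposable. Two elementary facts will be used repeatedly: first, $x,y \notin S_{xy}$, since neither vertex is adjacent to itself and $x,y$ are unconnected; and second, any path of length two between $x$ and $y$ must have its middle vertex in $S_{xy}$, that vertex being a common neighbour. The latter is what forces the relevant cycles to be long.

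For the direction ``legal $\Rightarrow$ $S_{xy}$ separates'' I would argue the contrapositive. Suppose $S_{xy}$ does not separate $x$ and $y$, so there is a path from $x$ to $y$ in $G$ meeting no vertex of $S_{xy}$, and choose a shortest such path $P = (x = u_0, u_1, \ldots, u_m = y)$. By the length-two observation $P$ has $m \ge 3$, and minimality forces $P$ to be chordless, since any chord $(u_i,u_j)$ would splice out a strictly shorter $S_{xy}$-avoiding path. Adding $(x,y)$ then closes $P$ into the cycle $(x, u_1, \ldots, u_{m-1}, y, x)$ of length $m+1 \ge 4$ whose only non-path edge is the cycle-closing edge $(x,y)$ itself; hence this cycle is unchorded in $G^+$ and $G^+$ is not decomposable, so the connection is illegal.

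For the converse, assume $S_{xy}$ separates $x$ and $y$ but, for contradiction, that $G^+$ is not decomposable, so $G^+$ contains an unchorded cycle of length $n \ge 4$ on vertices $v_0, v_1, \ldots, v_{n-1}$. Since $G$ itself is decomposable the cycle must use the new edge, so we may take $v_0 = x$, $v_1 = y$ with $(v_0,v_1) = (x,y)$; deleting this edge leaves a path in $G$ from $x$ to $y$ through $v_2, \ldots, v_{n-1}$. Because $S_{xy}$ separates $x$ and $y$, some $v_i$ with $2 \le i \le n-1$ lies in $S_{xy}$ and is therefore adjacent in $G$ to both $x$ and $y$. The crux of the argument, and the step I expect to be the main obstacle, is the short positional analysis showing these two adjacencies cannot both be cycle edges once $n \ge 4$: $(x,v_i)$ is a cycle edge only if $i = n-1$, while $(y,v_i)$ is a cycle edge only if $i = 2$, and these coincide only when $n = 3$. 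Hence at least one of $(x,v_i)$, $(y,v_i)$ is a chord of the cycle, contradicting that it is unchorded, so $G^+$ is decomposable after all. As with disconnections, the matching Metropolis acceptance probability would then follow by restating the Giudici--Green ratio in terms of $\phi(S_x)$, $\phi(S_y)$, $\phi(S_{xy})$ and $\phi(C_{xy})$.
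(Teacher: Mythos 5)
Your proof is correct and follows essentially the same route as the paper's: the contrapositive via a shortest $S_{xy}$-avoiding path (necessarily induced, hence yielding an unchorded cycle of length at least $4$ in $G^+$), and the converse by contradiction using a common neighbour on the cycle as a forced chord. Your explicit positional check that $(x,v_i)$ and $(y,v_i)$ cannot both be cycle edges when $n\ge 4$ merely spells out the step the paper compresses into ``$(x,w)$ and/or $(y,w)$ will be a chord.''
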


\begin{proof}
Let $G$ be a decomposable graph in which $(x,y)$ is not an edge and let
$G^+$ be the graph obtained from $G$ by connecting $(x,y)$.

If $S_{xy}$ does not separate $x$ and $y$ in $G$, there must be paths
in $G$ that do not pass through any common neighbour.
The shortest of these paths must be of length greater than 2, otherwise
the sole intermediate vertex would be a common neighbour, and there can be
no other edges connecting the intermediate vertices, otherwise a 
shorter path would exist. Connecting $(x,y)$ in $G^+$ then creates an
unchorded cycle of length greater than 3, and $G^+$ is not decomposable.

If $S_{xy}$ separates $x$ and $y$ in $G$, assume for the purpose of 
contradiction that $G^+$ is not decomposable. $G^+$ must then contain
an unchorded k-cycle of length $k >= 4$ that is broken by the 
disconnection of $(x,y)$.
$G$ must therefore contain a path $(x,u,\ldots,v,y)$ of length $k>=3$.
Since $S_{xy}$ separates $x$ and $y$ in $G$, this path must contain
a vertex $w$ which is a common neighbour of $x$ and $y$. We then have
a contradiction since  $(x,w)$ and/or $(y,w)$ will be a chord of 
the k-cycle in $G^+$.
\end{proof}

$S_{xy}$ is always complete when $x$ and $y$ are connected,
because otherwise there must be an unchorded 4-cycle, and when
the above condition is met, it is a separator of $G$.

\begin{definition}[Separator enabling a connection]
When connecting $x$ and $y$ is a legal move, $S_{xy}$ is a uniquely
defined separator and we
say that $S_{xy}$ {\em enables} the connection. 
\end{definition}
Note that the set of legal connections can be partitioned over their enabling
separators.

\begin{observation}[Acceptance probability for a connection] 
If connecting $(x,y)$ is a legal move, the
Metropolis acceptance probability is 
\begin{equation}
\max 
\left \{ 
1, \frac {\phi(S_{xy})\phi(C_{xy})} {\phi(S_{x}) \phi(S_y)}
\right \}.
\end{equation}
\end{observation}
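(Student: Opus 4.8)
The plan is to exploit the fact that connecting a non-adjacent pair $x,y$ is precisely the reverse of disconnecting them, and thereby to read off the required ratio by inverting the disconnection ratio already established in the preceding observation. Since the Metropolis acceptance probability for the proposed move $G \to G^+$ is $\max\{1, \pi(G^+)/\pi(G)\}$, the whole task reduces to showing that $\pi(G^+)/\pi(G) = \phi(S_{xy})\phi(C_{xy}) / (\phi(S_x)\phi(S_y))$, which is the reciprocal of the disconnection ratio.

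First I would establish reversibility together with the claim that the sets $S_{xy}$, $C_{xy}$, $S_x$, $S_y$ attached to the disconnection of $(x,y)$ in $G^+$ coincide with those attached to the connection in $G$. Three observations do this. (i) The common neighbours of $x$ and $y$ are unchanged by adding or removing the edge $(x,y)$, since the only adjacency altered is the one between $x$ and $y$ themselves; hence the common-neighbour set is $S_{xy}$ in both $G$ and $G^+$. (ii) By the Legal connections theorem, $S_{xy}$ separates $x$ and $y$ in $G$, so it is a separator of $G$ and therefore complete, and completeness is preserved when the edge $(x,y)$ is added to form $G^+$. (iii) By Theorem~\ref{disconnections}, a complete common-neighbour set makes the disconnection of $(x,y)$ in $G^+$ legal, with enabling clique $C_{xy}=S_{xy}\cup\{x,y\}$ and the same auxiliary sets $S_x$ and $S_y$. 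Thus disconnecting $(x,y)$ in $G^+$ is a legal move that returns us exactly to $G$.

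Then I would apply the \emph{Acceptance probability for a disconnection} observation to the reverse move $G^+ \to G$. That observation gives $\pi(G)/\pi(G^+) = \phi(S_x)\phi(S_y) / (\phi(S_{xy})\phi(C_{xy}))$, and taking reciprocals yields $\pi(G^+)/\pi(G) = \phi(S_{xy})\phi(C_{xy}) / (\phi(S_x)\phi(S_y))$, so that $\max\{1, \pi(G^+)/\pi(G)\}$ is the stated expression. As an alternative I could instead compute $\pi(G^+)/\pi(G)$ directly from the structurally Markov factorization~(\ref{structmark}), noting that forming $G^+$ merges the two cliques $S_x$ and $S_y$ into the single clique $C_{xy}$ and deletes the separator $S_{xy}$; this gives the same cancellation. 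The only point requiring care — and it is bookkeeping rather than a genuine obstacle — is the verification in the second paragraph that the reverse move is legal and that its associated sets match those of the forward move, since everything else follows immediately from the already-proven disconnection result.
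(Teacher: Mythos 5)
Your argument is correct, but it is genuinely different from what the paper does: the paper's ``proof'' of this observation is simply the one-line citation ``Simple restatement of the result of Giudici and Green (1999),'' whereas you actually derive the formula. Your main route --- showing that the move $G \to G^+$ is exactly reversed by a legal disconnection in $G^+$ with the \emph{same} four sets $S_{xy}, S_x, S_y, C_{xy}$ (since the common-neighbour set is unaffected by toggling the edge $(x,y)$, and its completeness persists), and then inverting the ratio from the disconnection observation --- is clean and buys a self-contained deduction of the connection case from the disconnection case, rather than two independent appeals to the literature. Your alternative route via the factorization (\ref{structmark}) is essentially what Giudici and Green do, and matches the bookkeeping the paper later makes explicit in step 5 of Procedure \ref{thegraphitself}; the only imprecision there is your parenthetical claim that forming $G^+$ ``merges the two cliques $S_x$ and $S_y$'': in general $S_x$ and $S_y$ need not be cliques of $G$ --- each is either a clique that disappears or a proper subset of some larger clique that becomes a new separator of $G^+$ --- but in either case the net effect on the ratio is division by $\phi(S_x)\phi(S_y)$, so the formula is unaffected. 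One further pedantic point: you read the disconnection observation as asserting $\pi(G)/\pi(G^+) = \phi(S_x)\phi(S_y)/\bigl(\phi(S_{xy})\phi(C_{xy})\bigr)$ rather than merely an equality of the two $\max\{1,\cdot\}$ expressions; that is clearly the intended content, but strictly speaking the ratio identity is the thing being restated from Giudici and Green, so it is worth stating that you are using the ratio form.
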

\begin{proof}
Simple restatement of the result of \citeasnoun{Giudici+Green:99}.
\end{proof}

Checking for legality of separation requires a graph search
to establish that no path exists.
In the worst case this can involve
traversing the
whole graph. To some extent, however, this might be mitigated by the 
following observation.

\begin{lemma}
If $S_{xy}$ is complete and there exits a path from $x$ to $y$ that
doesn't include any vertex in $S_{xy}$, then every vertex in  that
path must be connected to every vertex in $S_{xy}$.
\label{connecttosxy}
\end{lemma}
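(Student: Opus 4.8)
The natural approach is a proof by contradiction built on the defining property that a decomposable graph contains no unchorded cycle of length greater than $3$, the same engine used in the proof of Theorem \ref{disconnections} above. Fix an arbitrary vertex $w$ of the path and an arbitrary vertex $s \in S_{xy}$, and suppose, for contradiction, that $w$ and $s$ are not adjacent. I would aim to produce an unchorded cycle of length at least $4$ through the pair $\{w, s\}$, which is impossible in $G$.

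To set up the cycle, write the path as $(x = u_1, u_2, \ldots, u_n = y)$ with $w = u_i$ for some $1 < i < n$ (the endpoints $x$ and $y$ are adjacent to every element of $S_{xy}$ by the definition of common neighbour, so the claim is immediate for them). Since $s$ is a common neighbour of $x$ and $y$, the two edges $(u_n, s)$ and $(s, u_1)$ close the path into a cycle $(u_1, \ldots, u_n, s, u_1)$; as no $u_j$ lies in $S_{xy}$, the vertex $s$ differs from every $u_j$, so this is a genuine cycle carrying the non-edge $\{w, s\}$. I would then pass to a minimal cycle through $w$ and $s$ inside $G$ and invoke chordality on it, bringing in the completeness of $S_{xy}$ to control how a second separator vertex could supply chords incident to $s$.

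The step I expect to be the real obstacle is the chord accounting in this last reduction. A chord of the cycle $(u_1, \ldots, u_n, s)$ need not touch $s$ at all: it may join two path vertices $u_j$ and $u_k$, short-circuiting an arc of the path without forcing any edge toward $s$. The danger is exactly that $w$ lies on such a bypassed arc, so that the minimal chordless cycle one extracts skirts $w$ rather than witnessing the missing edge $\{w, s\}$. Closing this gap, that is, showing that the forced chordless cycle must still pass through $w$, is where the completeness hypothesis has to be used in an essential way, and it is the part on which I would concentrate; once it is secured, the resulting cycle of length at least $4$ contradicts decomposability and establishes the claim for every path vertex $w$ and every $s \in S_{xy}$.
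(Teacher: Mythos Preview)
Your plan coincides with the paper's: form the cycle $(u_1,\ldots,u_n,s,u_1)$ and claim it contains a chordless cycle of length at least $4$ through the non-edge $\{w,s\}$. The paper simply asserts this (the cycle ``must contain a cycle of length at least 4 that is unchorded, the missing chord being $(u,v)$''), while you correctly isolate the chord-accounting step as the crux and worry that path-internal chords $(u_a,u_b)$ may bypass $w$.

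That worry is fatal rather than merely technical: the lemma is false for arbitrary paths, so neither argument can be completed. Take $V=\{x,y,v,a,b,c\}$ with edge set $\{xv,\, yv,\, xa,\, ab,\, bc,\, cy,\, av,\, cv,\, ac\}$; the cliques $\{x,a,v\}$, $\{a,c,v\}$, $\{a,b,c\}$, $\{c,y,v\}$ satisfy the running intersection property, so $G$ is decomposable. Here $x\not\sim y$, $S_{xy}=\{v\}$ is (trivially) complete, and the path $(x,a,b,c,y)$ avoids $S_{xy}$, yet $b\not\sim v$. In the cycle $(x,a,b,c,y,v,x)$ the chords $ac$, $av$, $cv$ triangulate everything while skirting $b$, exactly the failure mode you anticipated. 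Note also that completeness of $S_{xy}$ is automatic whenever $x\not\sim y$ in a decomposable graph (two non-adjacent common neighbours would give a chordless $4$-cycle), so it cannot supply the extra leverage you were hoping for. The statement does hold if ``path'' is read as ``induced path'': then there are no $(u_a,u_b)$ chords, and taking the last $u_j$ before $w$ with $u_j\sim s$ and the first $u_k$ after $w$ with $u_k\sim s$ yields a chordless cycle $(s,u_j,u_{j+1},\ldots,u_k,s)$ of length at least $4$. Since any path avoiding $S_{xy}$ contains an induced sub-path on a subset of its vertices with the same endpoints, this corrected version still delivers the restricted-search theorem that follows.
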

\begin{proof}
Suppose that $u$ is a vertex in the path from $x$ to $y$ and that
$v \in S_{xy}$ with $u$ not connected to $v$. 
Then, $x,v,y,\ldots,u,\ldots,x$ is a cycle that must contain a cycle of
length at least 4 that is unchorded, the missing chord being $(u,v)$,
which contradicts the decomposability of $G$.
\end{proof}

This allows us to restrict the path searching to an
induced subgraph of $G$.

\begin{theorem}[Restricted path search]
Let $N_{xy}$ be the set of vertices not in $S_{xy}$ that are connected
to every vertex in $S_{xy}$. Let $G_{N_{xy}}$ be the subgraph of 
$G$ induced by $N_{xy}$. 
$S_{xy}$ separates $x$ and $y$ in $G$ iff $x$ and $y$ if there
is no path between $x$ and $y$ in $G_{N_{xy}}$.

In other words, connecting $x$ and $y$ in $G$ is legal iff 
$x$ and $y$  are in 
different components of $G_{N_{xy}}$.
\end{theorem}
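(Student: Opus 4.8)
The plan is to prove the stated equivalence directly as a biconditional between ``$S_{xy}$ separates $x$ and $y$ in $G$'' and ``there is no path between $x$ and $y$ in $G_{N_{xy}}$''; the reformulation in terms of connected components is then immediate, since two vertices lie in different components of a graph precisely when no path joins them, and the link back to legality is supplied by the \emph{Legal connections} theorem proved above. It is cleanest to argue the contrapositive in both directions, that is, to show that $S_{xy}$ fails to separate $x$ and $y$ in $G$ if and only if some path joins $x$ and $y$ inside $G_{N_{xy}}$.

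First I would dispatch the easy direction. Suppose there is a path from $x$ to $y$ in $G_{N_{xy}}$. Because $G_{N_{xy}}$ is the subgraph of $G$ \emph{induced} by $N_{xy}$, every edge of this path is an edge of $G$ and every vertex of the path lies in $N_{xy}$, hence outside $S_{xy}$ by the definition of $N_{xy}$. This is therefore a path from $x$ to $y$ in $G$ that avoids $S_{xy}$, so $S_{xy}$ does not separate $x$ and $y$. The forward direction is where Lemma~\ref{connecttosxy} carries the argument. Suppose $S_{xy}$ does not separate $x$ and $y$; then some path $P$ from $x$ to $y$ in $G$ avoids every vertex of $S_{xy}$. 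Since $x$ and $y$ are unconnected, $S_{xy}$ is complete (otherwise two non-adjacent common neighbours $u,v$ would yield the unchorded cycle $(x,u,y,v,x)$, contradicting decomposability), so the hypothesis of Lemma~\ref{connecttosxy} is satisfied and every vertex of $P$ is adjacent to every vertex of $S_{xy}$. Each internal vertex of $P$ therefore lies in $N_{xy}$, and the endpoints $x,y$ lie in $N_{xy}$ too, because each element of $S_{xy}$ is by definition a common neighbour of $x$ and $y$ while $x,y \notin S_{xy}$. Consequently $P$ lies entirely within $G_{N_{xy}}$, giving the required path.

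The only point requiring care — less a genuine obstacle than a piece of bookkeeping — is verifying the two facts that make the reduction to the induced subgraph legitimate: that $S_{xy}$ is complete, so that Lemma~\ref{connecttosxy} is applicable, and that the endpoints $x$ and $y$ are themselves vertices of $G_{N_{xy}}$, so that the phrase ``a path between $x$ and $y$ in $G_{N_{xy}}$'' is meaningful. Both follow at once from the definition of $S_{xy}$ as the common neighbourhood of $x$ and $y$ together with the decomposability of $G$; once they are in place, the theorem is essentially a restatement of Lemma~\ref{connecttosxy}, and the final ``in other words'' sentence follows by combining the equivalence with the \emph{Legal connections} characterisation and the definition of connected components.
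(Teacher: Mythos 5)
Your proof is correct and follows essentially the same route as the paper: the paper's own proof is a one-line appeal to Lemma~\ref{connecttosxy}, observing that $G_{N_{xy}}$ contains precisely the vertices through which an $S_{xy}$-avoiding $x$--$y$ path must pass. You have simply filled in the details the paper leaves implicit --- the easy direction, the completeness of $S_{xy}$ needed to invoke the lemma (which holds because $x$ and $y$ are non-adjacent), and the fact that $x,y \in N_{xy}$ --- all of which are handled correctly.
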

\begin{proof}
This follows directly from lemma \ref{connecttosxy} as $G_{N_{xy}}$ 
contains precisely those vertices that paths between $x$ and $y$ must
pass through.
\end{proof}

In many cases $G_{N_{xy}}$ can be far smaller than $G$ significantly
reducing the workload of establishing separation, however, 
identifying and restricting the search to $G_{N_{xy}}$ may 
involved significant computation.
In the
worst case, notably when $S_{xy}$ is empty and a separator of $G$, 
there is no saving.

\subsection{The sampler using the graph itself}

We can now restate the Giudici-Green sampler directly in terms
of the graph itself. 
In doing so, since checking the random Metropolis acceptance 
criterion is typically
far faster than checking for legality of the perturbation, we 
reverse the order of the criteria and proceed to the second
only when the first is met.

We note that, in the obvious way and without recourse to other
representations, we can
track the probability, up to 
a constant factor, of the current graph.
In an analogous fashion we can also track its clique set
and separator collection.
Let $G^{\emptyset} = G(V,E = \emptyset)$ be the trivial or empty graph.

\begin{procedure}[Giudici-Green with the graph only]
\label{thegraphitself}
\mbox{}\\
\begin{enumerate}
\item
\begin{itemize}
\item
Initialize $G = G^{\emptyset}$, 
\item
Set $\pi(G) = \pi(G^{\emptyset}) = \frac{\prod_{v \in V} \phi( \{ v \})}{\phi(\emptyset)^{n-1}}$.
\item
Set 
$\C = \C(G^{\emptyset} = \{ \{ v_1 \}, \{v_2\}, \ldots \{v_n\} \}$
\item
Set
$\S = \S(G^{\emptyset} = \{ \emptyset, \emptyset, \ldots \emptyset \}$,
that is a list of length $n-1$ in which the every element is the empty set.
\end{itemize}
\item
Randomly select two distinct vertices $x,y \in V$
and find $S_{xy}, S_x, S_y,$ and $C_{xy}$.
\label{loop}
\item
Generate a random variate $U \sim$ Uniform$(0,1)$.
\item
For $x$ and $y$ connected:
if 
\begin{displaymath}
	U <= \frac{\phi(S_x)\phi(S_y)}{\phi_(S_{xy}) \phi(C_{xy})}
\ \ \mbox{ and $C_{xy} \in \C$,}
\end{displaymath}
\begin{itemize}
\item
Disconnect $(x,y)$.
\item
Set $\pi(G) = \pi(G) \frac {\phi(S_x) \phi(S_y)} {\phi(C_{xy})\phi(S_{xy})}$.
\item 
Remove $C_{xy}$ from $\C$.
\item 
Add an instance of $S_{xy}$ to $\S$.
\item 
If $S_x \in \S$ remove one instance of it from $\S$,
otherwise add it to $\C$.
\item 
If $S_y \in \S$ remove one instance of it from $\S$,
otherwise add it to $\C$.
\end{itemize}
\item
For $x$ and $y$ unconnected:
if
\begin{displaymath}
	U <= \frac {\phi_(S_{xy}) \phi(C_{xy})} {\phi(S_x)\phi(S_y)}
\ \ \mbox{ and $S_{xy} \in \S$ and $S_{xy}$ separates $x$ and $y$,}
\end{displaymath}
\begin{itemize}
\item
Connect $(x,y)$.
\item
Set $\pi(G) = \pi(G) \frac {\phi(C_{xy})\phi(S_{xy})}  {\phi(S_x) \phi(S_y)}$.
\item 
Add $C_{xy}$ to $\C$.
\item 
Remove an instance of $S_{xy}$ from $\S$.
\item 
If $S_x \in \C$ remove it from $\C$,
otherwise add an instance of it it to $\S$.
\item 
If $S_y \in \C$ remove it from $\C$,
otherwise add an instance of it it to $\S$.
\end{itemize}
\item
Repeat from step \ref{loop}.
\end{enumerate}
\end{procedure}

Note that in step 4, testing for $C_{xy} \in \C$ is equivalent to 
testing that $S_{xy}$ is complete, and this takes $O(|C_{xy}|)$ time
whereas testing for the completeness of $S_{xy}$ using $G$ directly
take time of $O(|S_{xy}|^2)$.
Also, in step 5, the pre check that $S_{xy} \in \S$ can avoid a graph
search when $S_{xy}$ is not a separator of $G$ and hence cannot 
separate $x$ and $y$.
Similarly, tracking $\pi(G)$ is not necessary to run the sampler, but it may be
a useful statistic to monitor.

We can implement this procedure with two functions: one that
takes $x$, $y$ and $C_{xy}$, checks that $C_{xy}$ enables the disconnection
and, when it does, makes the disconnection by updating the data structure;
and one that takes $x$, $y$ and $S_{xy}$, checks that $S_{xy}$ enables
the connection, and when it does, makes the connection.
These, together with a specification of the data structure to match an initial
state, are sufficient to run the sampler.

When implementing the the sampler using junction trees, etc,
we can use the  alternative graph
representations as replacements for  the graph itself, 
or use them in conjunction with it.
\citeasnoun{Giudici+Green:99}, in effect, used the junction tree as a 
replacement. There is, however, good reason to keep the
raw graph representation,
for instance, if the graph is being displayed during the sampling process
or if summary statistics about its structure are required as output these
may be more readily obtained from the graph itself. More importantly,
the graph itself allows quick ascertainment of $S_{xy}$ and $C_{xy}$
while each of the other representations requires a path search 
to obtain them.
In what follows we will provide methods to find $S_{xy}$ using the
alternative graph representations, but note when used as the the
only representation these  method may be combined with the updating
procedures, and bypassed completely when used as an auxiliary.

\clearpage
\section{Giudici-Green using junction trees, the original formulation}

\subsection{Definitions and properties}

We now review junction trees, their relevant properties, and the Giudici-Green
sampler.

All the alternative representations of a decomposable graph $G(V,E)$ that we 
consider below are graphs in which the vertices represent subsets
of the initial vertex set $V$. We now define a property of such graphs. 

\begin{definition}[Set graphs]
For any set $\K$ of subsets of $V$,
$\K = \{ K_1, K_2, \ldots K_m : K_i \subseteq V \}$
any graph $H$  with vertex set $\K$,
that is, 
$H = H(\K, E \subseteq \K \times \K)$, is a {\em set graph}.
\end{definition}

\begin{definition}[Junction property]
Let $A \subseteq V$ and define $H_A$ to be the subgraph of set graph $H$
induced by the vertices that contain $A$.
If $H_A$ is a single connected component for all $\{ A \subseteq V \}$, 
then $H$ is said
to have the {\em junction property}.
\end{definition}

\begin{definition}[Junction trees]
Let $G(V,E)$ be a decomposable graph with clique set $\C$.
Let $J = J(\C, E \subset \C \times \C)$ be a tree with vertex set $\C$,
so that $|E| = |\C|-1$. 
If $J$ has the junction property $J$ is said to be a {\em junction tree}.
\end{definition}

\begin{figure}[htb]
\caption{A junction tree for the graph in figure \ref{rawpic}.
\label{junctionpic}}
\bigskip
\centerline{\includegraphics[width=5.5in]{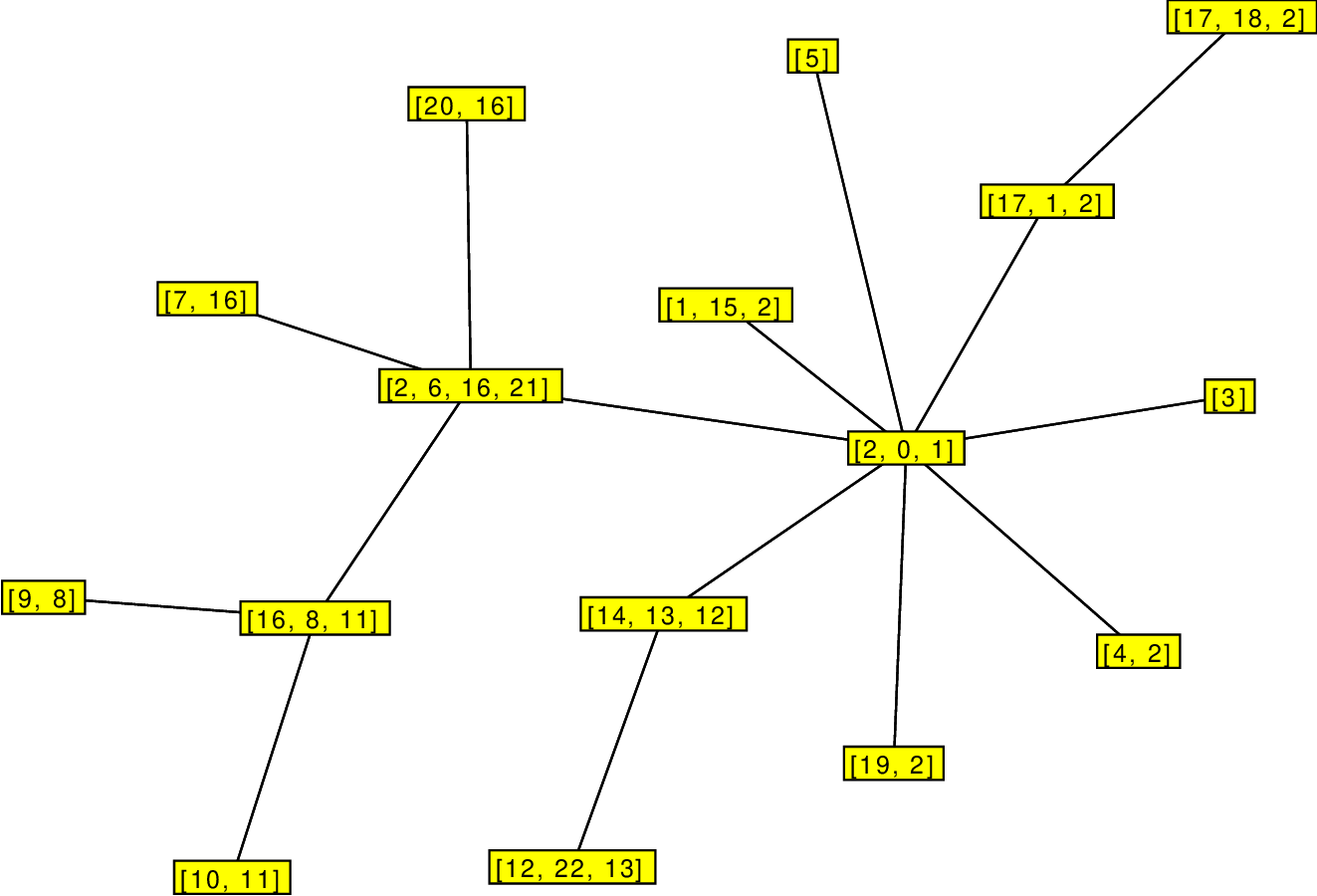}}
\end{figure}

Figure \ref{junctionpic} shows a junction tree for the decomposable
graph given in figure \ref{rawpic}.

A decomposable graph will typically have many equivalent junction tree
representations and \citeasnoun{Thomas+Green:09} give a method to 
enumerate these. Conversely, however, a junction tree uniquely determines its
decomposable graph.

Let $H^C = H^C(\C,\C \times \C)$ be a complete graph and let the edges of 
the graph be associated with the intersection of the cliques it connects.
The {\em weight} of an edge $(C_1,C_2)$ is defined as $|C_1 \cap C_2|$.
Then, $J$ is a junction tree of $G$ iff it is a heaviest spanning
tree of $H^C$, and this is sometimes used as an alternative definition.
While it is not
generally efficient to find a junction tree using
spanning tree algorithms it provides a useful alternative formulation,
in particular, it is possible at times to replace set equality and
set inclusion queries with simpler comparisons of set size. 

The method given by \cite{Tarjan+Yannakakis:84}
to construct a junction tree 
does so by finding a perfect ordering of the cliques
and then, at each step, connecting clique $C_i$ to some
$C_k, k<i$ such that
$C_k \supset  C_i \cap C_k = S_i$. 
Thus, each edge of the junction tree is associated with the intersection
of the cliques that it connects, which is a separator of $G$.


It will be useful in the following procedures to quickly identify a
clique that contains any given vertex which we will do by maintaining
a suitable vertex to clique map.
\begin{definition}[Vertex(-clique) map]
A {\em vertex-clique map}, or just {\em vertex map} is any map
that associates each vertex of $G$ to some clique of $G$ that contains it.
\end{definition}


\subsection{The sampler using junction trees}

We initialize the sampler to be the trivial graph $G^{\emptyset}$
which has cliques 
$\C_0 = \{ \{ v_i \}  \  i = 1 \ldots |V| \}$
and find a junction tree representing this state, which can be any
tree with this vertex set.

\begin{procedure}[Representing the trivial graph]
\mbox{ }\\
\begin{enumerate}
\item 
Set $ J = J(\C_0, E = \{  ( \{ v_1 \}, \{ v_{i} \}  ), 2 \leq i \leq |V| \})$.
\item
Set the vertex map by mapping $v_i$ to $\{ v_i \} \forall i$.
\end{enumerate}
\end{procedure}

Then we provide an optional method to identify $S_{xy}$ and 
for any $x, y \in V$.
\begin{procedure}[Finding $S_{xy}$]
\mbox{ }\\
\begin{enumerate}
\item
From the vertex map, find $C_x \ni x$.
\item
Search from $C_x$ until a clique $C_y \ni y$ is found.
\item
If $x \in C_y$, then $x$ and $y$ are connected in $G$.
Searching from
$C_x$ find $J_{xy}$ the connected subtree of $J$ induced by cliques
that contain $\{ x, y \}$. 
Then,
\begin{equation}
S_{xy} = \bigcup_{C\in V(J_{xy})} C - \{x,y\}
\end{equation}
where the union is taken over 
the cliques that are vertices of  $J_{xy}$.
\item
Otherwise, $x$ and $y$ are not connected in $G$.
Backtrack from $C_y$ along the path to $C_x$ and reset $C_x$ to be
the first clique in the path that contains $x$.
Then,
\begin{equation}
S_{xy} = C_x \cap C_y.
\end{equation}
\end{enumerate}
\end{procedure}

As always, 
$S_x = S_{xy} \cup \{ x \}$,
$S_y = S_{xy} \cup \{ y \}$,
$C_{xy} = S_{xy} \cup \{ x,y \}$.

\begin{procedure}[Disconnect $(x,y)$ if $C_{xy}$ enables]
\mbox{ }\\
\begin{enumerate}
\item
If $C_{xy}$ is not a vertex of $J$, $C_{xy}$ is not a clique and does not 
enable the disconnection so reject the perturbation.
Otherwise the disconnection is enabled.
\item
If $C_{xy}$ has a neighbour $C_x \supseteq S_x$ 
\begin{itemize}
\item 
disconnect $C_x$ from $C_{xy}$, 
\item
otherwise set $C_x = S_x$ and add it as a vertex of $J$.
\end{itemize}
\item
If $C_{xy}$ has a neighbour $C_y \supseteq S_y$ 
\begin{itemize}
\item 
disconnect $C_y$ from $C_{xy}$, 
\item
otherwise set $C_y = S_y$ and add it as a vertex of $J$.
\end{itemize}
\item
For each neighbour of $C_{xy}$, if it contains $x$ connect it to
$C_x$, otherwise connect it to $C_y$.
\item
Delete $C_{xy}$ from $J$.
\item
Connect $C_x$ to $C_y$.
\item
Update the vertex map by mapping each vertex in $C_x$ to $C_x$ and
each vertex in $C_y$ to $C_y$.
\end{enumerate}
\end{procedure}

\begin{procedure}[Connect $(x,y)$ if $S_{xy}$ enables]
\mbox{ }
\begin{enumerate}
\item
From the vertex map find, $C_x \ni x$.
Search from $C_x$ until a clique $C_y \ni y$ is found. 
Backtrack
from $C_y$ along the path to $C_x$ and reset $C_x$ to be the 
first clique in the path that contains $x$.
This finds the shortest path between a vertex containing $x$ and one 
containing $y$.
\item
Find in the path an edge representing a separator $S$ such that
$|S| = |S_{xy}|$. If no such edge exists, $S_{xy}$ does not
enable the connection, so reject the perturbation.
Otherwise, by the junction property, $S = S_{xy}$ and the
connection is enabled.
\item
Disconnect the vertices connected by the edge representing $S$. 
\item
Add $C_{xy}$ to the vertex set of $J$.
\item
If $|C_x| < |C_{xy}|$, 
\begin{itemize}
\item
then $C_{xy} \supset C_x$, so connect
the neighbours of $C_x$ to $C_{xy}$, and delete $C_x$ from the graph,
\item
otherwise, connect $C_x$ to $C_{xy}$.
\end{itemize}
\item
If $|C_y| < |C_{xy}|$, 
\begin{itemize}
\item
then $C_{xy} \supset C_y$, so connect
the neighbours of $C_y$ to $C_{xy}$, and delete $C_y$ from the graph,
\item
otherwise, connect $C_y$ to $C_{xy}$.
\end{itemize}
\item
Map every vertex in $C_{xy}$ to $C_{xy}$.
\end{enumerate}
\end{procedure}

\clearpage
\section{Giudici-Green using Almond trees}

\citeasnoun{Jensen:88} introduced a version of the junction tree that
included separators as intermediate vertices
along the edges connecting cliques, as such, each
separator was represented by multiple vertices corresponding to 
its multiplicity in the running intersection. 
\citeasnoun{Almond+Kong:91} modified this
formulation into a tree using a single vertex for each separator. We
slightly modify this further by assigning direction to each edge in the tree.

\begin{definition}(Almond tree)
For a decomposable graph $G$ with cliques $\C$ and separators $\S$ consider a
graph $H$ to have
\begin{itemize}
\item
vertex set $\C \cup \S$
\item
with edges $\{ (S,T) \}$ for each pair where $ S \subset T$.
\end{itemize}
Any spanning tree $A$ of $H$ with the junction property is an {\em 
undirected Almond tree}
for $G$ and is a {\em (directed) Almond tree} when its edges are
oriented from $S$ to $T$ where $S \subset T$..
\end{definition}

\citeasnoun{Almond+Kong:91} called $H$ the {\em junction graph}.
It is convenient to refer to the in-neighbours of a set to be its
{\em parents} and to the out-neighbours as its {\em children}. 
Vertices corresponding to cliques have no children in an Almond tree,
and the number of children for a separator vertex is one more than
its multiplicity.

An alternative formulation is given by \citeasnoun{Jensen+Jensen:94},
as follows.

\begin{theorem}
Assign to each edge $(S,T)$ of $H$ a weight $|S\cap T| = |S|$ if $S\subset T$.
Any heaviest spanning tree $A$ of $H$ is an undirected Almond tree for $G$.
\end{theorem}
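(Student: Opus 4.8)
The plan is to turn ``heaviest spanning tree'' into ``junction property'' by a weight-counting argument. Since an undirected Almond tree is by definition a spanning tree of $H$ with the junction property, and a heaviest spanning tree is in particular a spanning tree, the entire content of the theorem is that maximising the total weight $W(A) = \sum_{(S,T)\in A} |S\cap T|$ forces $A$ to have the junction property.

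The engine is an identity for $W(A)$. Every edge of $H$ joins two comparable sets, and for an edge $(S,T)$ the weight $|S\cap T|$ simply counts the vertices $v\in V$ lying in both endpoints. Summing over the edges of $A$ and grouping the contributions by $v$ gives
\[
 W(A) = \sum_{v\in V} e_v(A),
\]
where $e_v(A)$ is the number of edges of $A$ both of whose endpoints contain $v$. Fix $v$ and let $n_v$ be the number of vertices of $H$ that contain $v$; this depends only on $\C\cup\S$ and not on $A$. The $v$-containing vertices, together with the edges counted by $e_v(A)$, form the subforest of the tree $A$ induced by those vertices, so if they fall into $c_v(A)$ components then $e_v(A) = n_v - c_v(A)$. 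Hence
\[
 W(A) = \sum_{v\in V} \bigl( n_v - c_v(A) \bigr) \le \sum_{v\in V}(n_v-1),
\]
since each $c_v(A)\ge 1$ (every $v$ lies in at least one clique); equality holds precisely when $c_v(A)=1$ for every $v$, i.e. when the vertices containing any single vertex $v$ induce a connected subtree of $A$.

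Next I would promote this single-vertex connectivity to the full junction property with a short tree argument. If vertices $K_1$ and $K_2$ of $A$ both contain a set $B\subseteq V$, then for each $v\in B$ the unique path in the tree $A$ between $K_1$ and $K_2$ stays inside the connected subtree of $v$-containing vertices, so every vertex on that path contains $v$; as this holds for all $v\in B$ simultaneously, the whole path lies in $H_B$, which is therefore connected. Thus ``$c_v(A)=1$ for every singleton $v$'' is equivalent to $A$ having the junction property, and by the previous paragraph this is in turn equivalent to $W(A)$ attaining the bound $\sum_{v\in V}(n_v-1)$.

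It remains to link the bound to heaviest trees. Because $G$ is decomposable it possesses a junction tree, and from this one can build a spanning tree of $H$ that has the junction property, following the construction of \citeasnoun{Almond+Kong:91} (or by attaching cliques and separator vertices in a perfect clique order). Such a tree attains the bound, so $\sum_{v\in V}(n_v-1)$ is the maximum weight over all spanning trees of $H$. Any heaviest spanning tree therefore attains it and, by the equality condition, has every $c_v=1$ and hence the junction property, so it is an undirected Almond tree. The main obstacle is exactly this last ingredient: the weight identity only shows that weight is maximised by junction-property trees, so to deduce that heaviest trees have the junction property I must know the bound is achievable at all, i.e. that an Almond tree exists. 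Establishing existence cleanly -- converting the edge-labelled junction tree, in which a separator of multiplicity $m$ labels $m$ distinct edges, into a tree with a single vertex per separator without breaking the junction property -- is the step requiring genuine care; the rest is bookkeeping.
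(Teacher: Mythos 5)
Your argument is correct, and it is genuinely more informative than what the paper provides: the paper's entire proof is the citation ``Theorem 3 of Jensen and Jensen (1994),'' whereas you reconstruct the actual mechanism, which is essentially the standard counting proof behind that cited theorem. The key identity $W(A)=\sum_{v\in V}e_v(A)=\sum_{v\in V}\bigl(n_v-c_v(A)\bigr)$ is right (the induced subgraph of a tree on the $v$-containing vertices is a forest, so edges equal vertices minus components), the equality case correctly characterises the singleton junction property, and your promotion from singletons to arbitrary $B\subseteq V$ via uniqueness of tree paths is sound (the case $B=\emptyset$ is covered because a spanning tree is connected, and sets $B$ contained in at most one vertex of $H$ are trivial). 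You are also right to flag the one load-bearing external ingredient: the bound $\sum_v(n_v-1)$ is only known to be the maximum once some spanning tree of $H$ with the junction property is exhibited, and for this you must lean on the Almond--Kong construction (or on converting a junction tree, with its multiplicity-$m$ separators labelling $m$ edges, into a subset-graph spanning tree with one vertex per separator). Since the paper itself takes the existence of Almond trees as given from Almond and Kong (1991), invoking it here is legitimate, and with that your proof is complete; what it buys over the paper's approach is a self-contained verification that does not require the reader to consult Jensen and Jensen, at the cost of a page of bookkeeping the paper chose to outsource.
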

\begin{proof}
Theorem 3 of \citeasnoun{Jensen+Jensen:94}.
\end{proof}

Note that when edge of equal weight can be chosen in constructing the
heaviest spanning tree, \citeasnoun{Jensen+Jensen:94} choose $(S,T)$
to minimize $|T|$, however, this is not necessary to satisfy the
original definition of \citeasnoun{Almond+Kong:91}.

\begin{figure}[htb]
\caption{An Almond tree for the graph in figure \ref{rawpic}.
\label{almondpic}}
\bigskip
\centerline{\includegraphics[width=5.5in]{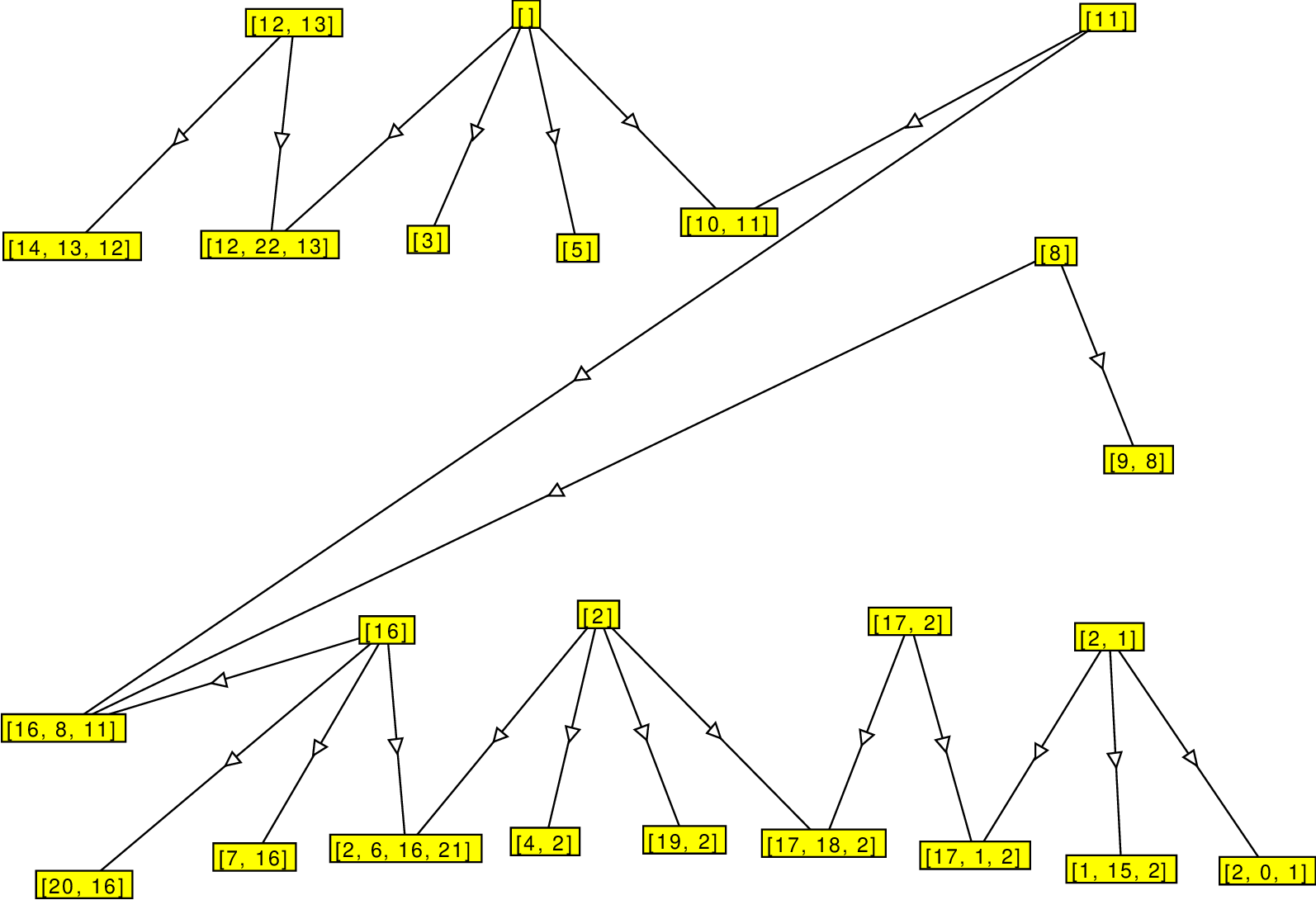}}
\end{figure}

Figure \ref{almondpic} shows an Almond tree for the decomposable
graph given in figure \ref{rawpic}.

In what follows, we will consider intermediate graphs that are not
true Almond trees because a separator has become a subset of a single
clique and needs to be removed. The following procedure achieves the.
removal of these {\em redundant} separators.

\begin{procedure}[Check for and remove a redundant separator $S$]
\mbox{}\\
\begin{enumerate}
\item
If $S$ has no children, it is a clique, not a separator, so stop here.
\item
If $S$ has more than 1 child, it is not redundant, so stop here.
\item
If $S$ has exactly 1 child, it is redundant, so remove it as follows.
\begin{itemize}
\item
Connect each parent of $S$ to its child.
\item
Remove $S$ from the graph.
\end{itemize}
\end{enumerate}
\end{procedure}

\subsection{The sampler using Almond trees}

We now define the methods to implement the Giudici-Green sampler 
using Almond trees.

\begin{procedure}[Representing the trivial graph]
\mbox{ }\\
\begin{enumerate}
\item
Set the vertex set to be $ \emptyset \cup \{  \{v_1\} \ldots \{v_n\} \}$.
\item
Connect the directed edges $\{ (\emptyset,v_i) \forall i \}$.
\item
Set the vertex map by mapping $v_i$ to $ \{ v_i \} \forall i$.
\end{enumerate}
\end{procedure}

\begin{procedure}(Disconnecting $(x,y)$ if $C_{xy}$ enables)
\mbox{ }\\
\begin{enumerate}
\item 
If $C_{xy}$ is not a vertex of the Almond tree it is not complete, the
disconnection is not legal, so stop here.
\item
If $S_{xy}$ is vertex of the tree, find the path from $C_{xy}$ to
$S_{xy}$ and disconnect $S_{xy}$ from its neighbour in the path.

\item
If $S_x$ is not a vertex in the tree, set $C_x = S_x$.
\item
Otherwise, $S_x$ is a vertex of the tree and it must be connected to $C_{xy}$.
\begin{itemize}
\item
Disconnect $(S_x,C_{xy})$.
\item
If $S_x$ is not now redundant, set $C_x = S_x$.
\item
Otherwise, set $C_x$ to be the child of $S_x$ and remove redundant $S_x$.
\end{itemize}

\item
If $S_y$ is not a vertex in the tree, set $C_y = S_y$.
\item
Otherwise, $S_y$ is a vertex of the tree and it must be connected to $C_{xy}$.
\begin{itemize}
\item
Disconnect $(S_y,C_{xy})$.
\item
If $S_y$ is not now redundant, set $C_y = S_y$.
\item
Otherwise, set $C_y$ to be the child of $S_y$ and remove redundant $S_y$.
\end{itemize}

\item
For each neighbour $S$ of $C_{xy}$ in the graph, if $S \ni x$ connect
$(S,C_x)$, otherwise connect $(S,C_y)$.
\item
Remove $C_{xy}$ from the graph.
\item
Connect $(S_{xy},C_x)$ and $(S_{xy},C_y)$.
\item
\begin{itemize}
\item 
Map ever vertex in $C_x$ to $C_x$.
\item 
Map ever vertex in $C_y$ to $C_y$.
\end{itemize}
\end{enumerate}
\end{procedure}

\begin{procedure}(Connecting $(x,y)$ if $S_{xy}$ enables)
\mbox{ }\\
\begin{enumerate}
\item
If $S_{xy}$ is not a vertex of the graph, it is not a separator and 
the connection is not legal. Stop here.
\item
From the vertex map find $C_x \ni x$, and find the path from
$C_x$ to $S_{xy}$. 
Reset $C_x$ to be last vertex in the path containing $x$ and
record $P_x$ the penultimate vertex in the path before
$S_{xy}$.
\item
From the vertex map find $C_y \ni y$, and find the path from
$C_y$ to $S_{xy}$. 
Reset $C_y$ to be last vertex in the path containing $y$ and
record $P_y$ the penultimate vertex in the path before
$S_{xy}$.
\item
If $P_x = P_y$, $S_{xy}$ does not separate $x$ and $y$ so stop here.
\item
Disconnect $(S_{xy},P_x)$ and $(S_{xy},P_y)$.
\item
If $S_x$ is a vertex of the graph, connect $(S_x,C_{xy})$, and 
remove $S_x$ if it is redundant.
\item
Otherwise, connect $(S_x,C_{xy})$ and $(S_x,C_x)$.
\item
If $S_y$ is a vertex of the graph, connect $(S_y,C_{xy})$, and 
remove $S_y$ if it is redundant.
\item
Otherwise, connect $(S_y,C_{xy})$ and $(S_y,C_x)$.
\item
Connect $(S_{xy},C_{xy})$. If $S_{xy}$ is redundant, remove it.
\item
Map every vertex in $C_{xy}$ to $C_{xy}$.
\end{enumerate}
\end{procedure}

The procedure to find $S_{xy}$ is the same as for Junction trees.

\clearpage
\section{Giudici-Green using Ibarra graphs}

\subsection{Definitions and properties}

The {\em clique-separator graph} for a decomposable graph $G$ was defined by 
\citeasnoun{Ibarra:09} to have vertices corresponding to both 
the cliques and separators of $G$, with directed edges between separators,
and undirected edges connecting separators and cliques. We slightly
modify the definition to make all edges directed, and refer
to the resulting structure as an {\em Ibarra graph}.

\begin{definition}[Ibarra's clique-separator graph]
For a decomposable graph $G$ with cliques $\C$ and separators $\S$
define the Ibarra graph, $I$, to have 
\begin{itemize}
\item
vertex set $\C \cup \S$
\item
with directed edges $\{ (S,T) \}$ from each pair
where  $S \subset T$, and there is no separator $U$ such that 
$S \subset U \subset T$. 
\end{itemize}
\end{definition}

Note that the Ibarra graph has a single vertex for each separator regardless
of its multiplicity. We also note that since, clearly, $I$ is a
connected spanning subgraph of the junction graph, any heaviest edge
spanning tree of $I$ is an Almond tree, but there are Almond trees that use
edges not present in the Ibarra graph.
Almond trees that satisfy
the additional requirement of \citeasnoun{Jensen+Jensen:94}
will be heaviest spanning trees of the Ibarra graph, but not
every heaviest spanning tree of an Ibarra graph will satisfy 
this requirement.

\begin{figure}[htb]
\caption{The Ibarra graph for the graph in figure \ref{rawpic}.
\label{ibarrapic}}
\bigskip
\centerline{\includegraphics[width=5.5in]{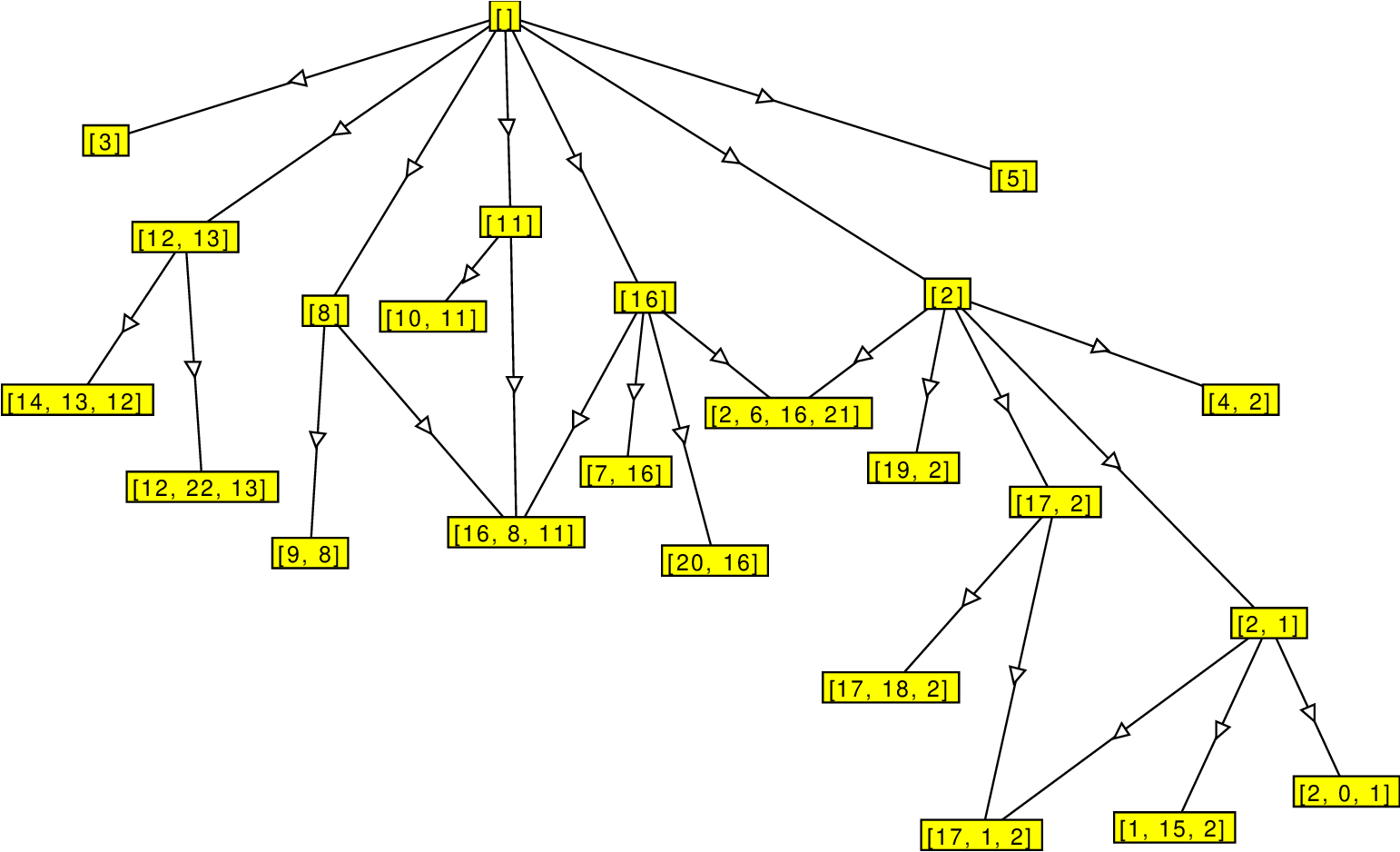}}
\end{figure}

Clearly, since the edges of $I$ are uniquely defined by the cliques and
separators, there is a unique Ibarra graph for any decomposable graph, and 
conversely, the Ibarra graph contains a complete specification of the
cliques and so uniquely defines the corresponding decomposable graph.

Figure \ref{ibarrapic} shows the Ibarra graph for the decomposable
graph given in figure \ref{rawpic}.

\begin{theorem}(Junction property)
The Ibarra graph has the junction property.
\end{theorem}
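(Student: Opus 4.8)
The plan is to reduce the statement to the already-established junction property of Almond trees, rather than to argue directly about paths in the directed graph \(I\). Since \(I\) is a connected spanning subgraph of the junction graph (as noted just before the theorem), it possesses spanning trees; let \(A\) be a heaviest spanning tree of \(I\). By the note preceding this theorem, any such \(A\) is an Almond tree, and by definition every Almond tree is a spanning tree of the junction graph with the junction property. Thus for every \(W \subseteq V\) the subgraph \(A_W\) of \(A\) induced by the vertices containing \(W\) is connected.

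The second step transfers connectivity from \(A\) to \(I\). The graphs \(A\) and \(I\) share the vertex set \(\C \cup \S\), and \(A\) uses only edges of \(I\), so \(E(A) \subseteq E(I)\). Fixing \(W \subseteq V\), let \(I_W\) and \(A_W\) be the subgraphs induced by the vertices containing \(W\). These induced subgraphs have identical vertex sets, and every edge of \(A_W\) is an edge of \(I_W\): an edge of \(A\) joining two vertices that each contain \(W\) is also an edge of \(I\) joining the same two vertices, hence lies in \(I_W\). Because \(A_W\) is connected, and \(I_W\) contains \(A_W\) as a spanning subgraph on the same vertex set, \(I_W\) is connected as well. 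Since \(W\) was arbitrary, \(I\) has the junction property.

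One subtlety to flag is orientation: \(I\) and \(A\) are directed, so ``connected'' must be read throughout as weak (underlying-undirected) connectivity; the inclusion \(E(A) \subseteq E(I)\) respects orientation, so the edge-transfer argument is unaffected. Note also that the case \(W = \emptyset\) simply recovers the connectivity of \(I\) itself, which is exactly the hypothesis that \(I\) is a connected spanning subgraph of the junction graph.

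The main obstacle is essentially offloaded onto the cited fact that a heaviest spanning tree of \(I\) is an Almond tree; granting that, the remaining work is a routine edge-transfer. If instead one wanted a self-contained proof working directly inside \(I\), the hard part would be routing, for two \emph{incomparable} vertices \(P, Q \supseteq W\), a connecting walk that stays among sets containing \(W\) while using only Ibarra edges, i.e.\ edges between consecutive sets in the inclusion order. Comparable pairs are easy, since a maximal chain \(S = U_0 \subset U_1 \subset \cdots \subset T\) of intervening separators yields a path all of whose vertices contain \(W\); but incomparable pairs would force an appeal to the clique-level junction property of the junction tree followed by a descent to the separator level, and it is precisely this bookkeeping that the Almond-tree reduction lets us avoid.
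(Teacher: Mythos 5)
Your edge-transfer step is sound: if some spanning subgraph of $I$ on the full vertex set $\C \cup \S$ has the junction property, then so does $I$, since for each $W$ the two induced subgraphs have the same vertices and the smaller edge set is contained in the larger. The problem is the premise you feed into it. The claim that a heaviest spanning tree of $I$ exists and is an Almond tree is not an established result in the paper; it is an informal aside, and the justification offered there (``clearly, $I$ is a connected spanning subgraph of the junction graph'') does not by itself imply it --- a connected spanning subgraph of a weighted graph need not contain any maximum-weight spanning tree of the ambient graph, so its own heaviest spanning trees can fail to be heaviest spanning trees of the junction graph and hence fail to be Almond trees in the sense of \citeasnoun{Jensen+Jensen:94}. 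To make your reduction rigorous you would have to prove (i) that $I$ is connected, so that a spanning tree exists at all, and (ii) that $I$ contains a maximum-weight spanning tree of the junction graph. Point (i) is exactly the $A=\emptyset$ instance of the junction property you are trying to establish, and point (ii) needs its own exchange argument; so as written the proof is close to circular, with essentially all of the difficulty pushed into an unproven remark. You flag this offloading yourself, which is honest, but it means the argument is not yet a proof.

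The paper's own proof is direct and self-contained: for a given $A$, it first observes that any separator containing $A$ has directed paths to at least two cliques containing it, every vertex on those paths containing $A$; it then orders the cliques containing $A$ compatibly with a perfect ordering and uses the running intersection property, via $S_i = C_i \cap \bigcup_{j<i} C_j \supseteq A$, to produce for each such clique after the first a separator containing $A$ linking it to an earlier one, which inductively connects all vertices of $I$ that contain $A$. If you want to retain your reduction you should supply proofs of (i) and (ii), but you will likely find that doing so amounts to redoing this direct argument.
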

\begin{proof}
For any set $A$ if no clique or separator contains $A$ the 
junction property holds trivially.

Now consider any separator that contains $A$.
Since this separator is an intersection of cliques, there must be at
least 2 cliques that contain it and hence contain $A$.
Moreover there are directed paths from the separator
to each of these cliques and each vertex along the paths also contains $A$.

Now consider the cliques that contain $A$. If there is only 1, then it
is the only vertex in the graph containing $A$, and the junction property
holds.
Now suppose there are $m$ cliques containing $A$. Label them
$C_1^A, \ldots C_m^A$ such that the order of the cliques in the
perfect ordering of the cliques of $G$ is preserved.
For any clique $C_i^A$ if $A \subset C_i^A \bigcup_{j=1}^{i-1}C_j$ there is a 
clique $C_k, k < i$ such $A \subset C_k$ so $C_k = C_l^A$ for some
$l < i$. Also there is a separator $S_i^A = C_i^A \cap C_l^A \supset > A$
and, hence, there is a path from $S_i^A$ to each of $C_i^A$ and $C_l^A$
where each vertex on the path also contains $A$, and hence a path
between $C_i^A$ and $C_l^A$ in which all the vertices contain $A$.
Applying this for $i=2,\ldots m$ shows that the the vertices containing
$A$ form a single connected sub graph of $I$ and junction property holds.
\end{proof}

The Ibarra graph is a special case of a {\em set inclusion graph}
\cite{Huang+al:19}.
Finding all 
subsets of a set in the graph is straightforwardly done by recursively 
following parental edges, that is, a set's subsets are its 
{\em ancestors} in the graph. Similarly, a set's supersets are its
{\em descendants}.

Vertices corresponding to cliques have no children, but will have
parents except in the case when $G$ comprises a fully connected single clique. 

Vertices corresponding
to separators always have children, but may or may not have parents.
The subgraph of $I$ induced by the descendants of a separator $S$ must 
have at least 2 components, these being the subgraphs separated by $S$.
The multiplicity of $S$ is one less than the number of components in 
this induced subgraph.

\subsection{Required sub-procedures}

In each case when we need to add a vertex $X$ to an Ibarra graph we will
be able to readily identify a superset $C$ that is already in the graph.
We can use this to simplify and speed up the process. 
We will refer to this as {\em adding $X$ above $C$}.

\begin{procedure}[Adding a set $X$ above a known superset $C$]
\mbox{ }\\
\begin{enumerate}
\item
Find all the ancestors of $C$, that is all its subsets, and 
sort them from largest to smallest.
\item
\label{ups}
For each set $A$ in the list:
\begin{itemize}
\item
If $A \subset X$, connect the edge $(A,X)$, and remove $A$ and all
its ancestors, that is all its subsets, from the list.
\item
Otherwise, just remove $A$ from the list.
\end{itemize}
\item
Starting from $C$, find all the sets in the graph that contain $X$, and 
sort them from smallest to largest. Because of the junction property, the
supersets of $X$ will be a single connected component of the graph, so this
search can be localized.
\item
For each set $A$ in the list, connect the edge $(X,A)$, and remove
$A$ and all its descendants, that is all its supersets, from the list.
\label{downs}
\item
Disconnect any edges that connect a parent of $X$ to a child of $X$.
\end{enumerate}
\end{procedure}

A more general method for adding a  set to a set inclusion graph 
can be had by, at step \ref{ups}, starting  with a list sorted
from largest to smallest of
all the vertices in the graph, and at step \ref{downs} starting with 
the reverse of this list and checking for set inclusion before making
the connection.

As with the case of the Almond tree,
in the course of updating an Ibarra graph to accommodate 
connecting or disconnecting an edge we will make intermediate graphs 
with redundant separators. Thse are  
separators with descent sets that have a single component. These
need to be removed to make a true Ibarra graph.

\begin{procedure}[Checking for and removing a redundant separator $S$]
\mbox{ }\\
\begin{enumerate}
\item
Find all the descendants of $S$.
\item
Starting with any descendant, and traversing only edges that connect 
descendants, visit all vertices reachable from the starting point.
\item
If all its descendants have been reached, then $S$ is redundant
and needs to be removed from the graph. Otherwise $S$ separates 
the components among its descendants so stop here.
\item
Find the parents of $S$, the children of $S$, and then remove $S$ from the
graph.
\item
For each parent $P$ and child $C$, if $P$ is not  an ancestor of $C$ in
the graph with $S$ deleted, connect the edge $(P,C)$.
\end{enumerate}
\end{procedure}

\subsection{The sampler using Ibarra graphs}
We can now define the methods sufficient to implement the Giudici-Green
sampler using Ibarra graphs.

\begin{procedure}[Representing the trivial graph]
\mbox{ }\\
\begin{enumerate}
\item
Set the vertex set to be $ \emptyset \cup \{  \{v_1\} \ldots \{v_n\} \}$.
\item
Connect the directed edges $\{ (\emptyset,v_i) \forall i \}$.
\item
Set the vertex map by mapping $v_i$ to $ \{ v_i \} \forall i$.
\end{enumerate}
\end{procedure}

The above procedure is identical to that for Almond trees.

\begin{procedure}[Disconnect $(x,y)$ if $C_{xy}$ enables]
\mbox{ }\\
\begin{enumerate}
\item
If $C_{xy}$ is not a vertex of the graph, it is not complete and
the disconnection is not 
legal so stop here.
\item
\begin{itemize}
\item
If $S_{xy}$ is not a vertex of the graph, add it above superset $C_{xy}$.
\item
If $S_x$ is not a vertex of the graph, add it above superset $C_{xy}$.
\item
If $S_y$ is not a vertex of the graph, add it above superset $C_{xy}$.
\end{itemize}
\item
Delete $C_{xy}$ from the graph.
\item
\begin{itemize}
\item 
If $S_x$ is redundant, remove it from the graph and reset $S_x$ to be 
one of its descendants.
\item 
If $S_y$ is redundant, remove it from the graph and reset $S_x$ to be 
one of its descendants.
\end{itemize}
\item
Map all vertices in $S_x$ to $S_x$, and all vertices of $S_y$ to $S_y$.
\end{enumerate}
\end{procedure}

\begin{procedure}[Connect $(x,y)$ if $S_{xy}$ enables]
\mbox{ }\\
\begin{enumerate}
\item
\begin{itemize}
\item
If $S_{xy}$ is not a vertex of the graph, it is not a separator, and
hence does not separate $x$ and $y$, the connection is not legal
so stop here.
\item
Otherwise, from the vertex map find $C_y \ni y$.
\item
Starting from $C_y$, following edges in either direction, always
visiting the biggest reachable set at each step, search until the 
first set $C_x \ni x$ is found.
\item
Backtrack along the path and reset $C_y$ to be the first set containing
$y$ reached. This gives the shortest path from a set containing
$x$ to a set containing $y$.
\item
If this path does not contain the vertex $S_{xy}$, $S_{xy}$ does not 
separate $x$ and $y$ so stop here.
\end{itemize}
\item
\begin{itemize}
\item
If $S_x$ is not a vertex, add it above superset $C_x$.
\item
If $S_y$ is not a vertex, add it above superset $C_y$.
\end{itemize}
\item
Add $C_{xy}$ to the vertex set and connect the edges
$(S_x,C_{xy})$ and $(S_y,C_{xy})$.
\item
\begin{itemize}
\item
If $S_x$ is redundant remove it from the graph.
\item
If $S_y$ is redundant remove it from the graph.
\item
If $S_{xy}$ is redundant remove it from the graph.
\end{itemize}
\item
Map every vertex in $C_{xy}$ to $C_{xy}$.
\end{enumerate}
\end{procedure}

The procedure to find $S_{xy}$ is essentially the same as for 
Junction trees and Almond trees except that the search from $C_y$ to $C_x$
always visits the biggest reachable set first, and the search must 
allow for multiple paths between vertices as the Ibarra graph is not
in general a tree.

\clearpage
\section{Results}

All of the above methods have been implemented by the author in a 
Java program called \code{DecosDemo} available from github at
{
\par
\centering
\code{https://github.com/alun-thomas/decos}.
\par
}
This program demonstrates sampling decomposable graphs from distributions that control 
for the density of edges and clique sizes.
These are all examples of structurally Markov
distributions that might plausibly be used 
as priors in Bayesian structural model estimation.
The user can either choose one representation to use for the sampling and
compare computational times, or view all graph representations 
simultaneously for a single sampling run. 
This program was used to obtain the 
graph drawings in this work, and the results presented where we compare 
computational times for each representation for a variety of runs over
graphs of different sizes and different probability densities.
The sampling distributions we considered were:
\begin{itemize}
\item
Uniform over decomposable graphs.
\item
Uniform over decomposable graphs with maximum clique size 3.
This was done by setting the clique potential to be
\begin{eqnarray}
	\phi(C) & = & 1 \ \ \mbox{if} \ |C| \leq 3 \nonumber \\
		& = & 0 \ \ \mbox{otherwise}.
\end{eqnarray}
\item
With probability favouring smaller graphs using an edge penalty,
specifically
\begin{equation}
\pi(G(V,E)) \propto e^{-\alpha |E|}
\end{equation}
which was done using clique potential defined by
\begin{equation}
\log (\phi(C)) = \frac{\alpha |C|(|C|-1)}{2}.
\end{equation}
We did this for $\alpha = 1$ and $\alpha = 2$.
\end{itemize}

For each distribution we show the computational times for sampling graphs
of in a range of sizes from 100 to 2000. As indicators for whether equilibrium
sampling has been reached, we show plots for the number of edges in the
graph and the acceptance probability which we 
define as proportion of graphs which pass the distribution criterion
and are also decomposable.
Also shown is the 1 millionth graph of size 100 sampled under each 
distribution.
By using the same random seed each time so that the representations
deal with exactly the same sequence of proposals and, hence, sampled graphs.
All program runs were done on a laptop with
11th Gen Intel Core i9-11950H 2.60GHz processors.

\begin{figure}[htb]
\caption{
\label{timesunif}
Computational times by number of vertices when making 
1M and 10M samples uniformly over decomposable graphs
using a variety of graph representations.
Black = the graph itself, red = junction tree,
green = Almond tree, blue = Ibarra graph.
}
\bigskip
\begin{knitrout}
\definecolor{shadecolor}{rgb}{0.969, 0.969, 0.969}\color{fgcolor}
\includegraphics[width=\maxwidth]{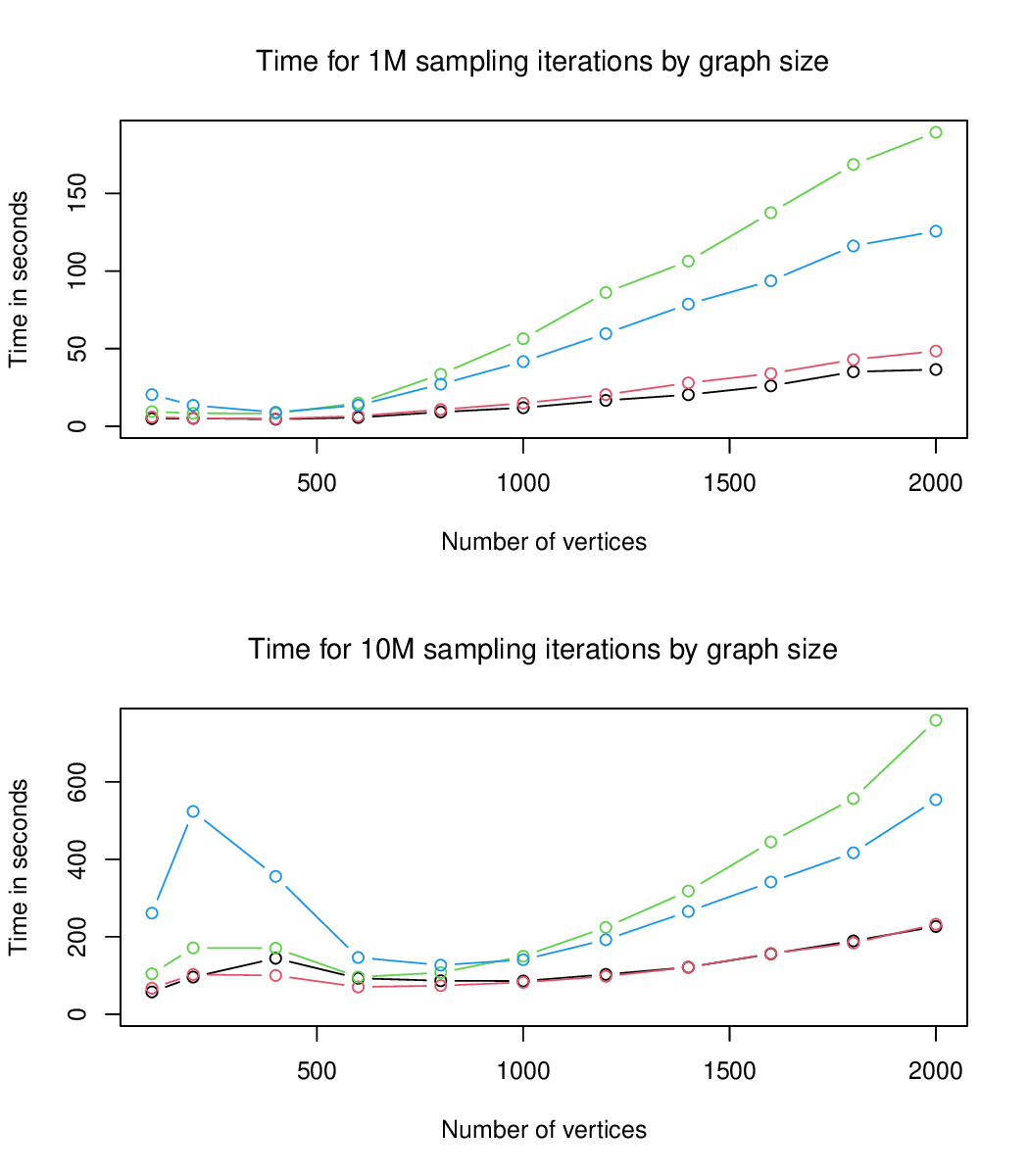} 
\end{knitrout}
\end{figure}

\begin{figure}[htb]
\caption{
\label{edgesunif}
Plots of the number of edges and acceptance probability
by iteration
when making
10M
samples
uniformly over decomposable graphs
for a variety of graph sizes.
Black = 100 vertices, red = 500, green = 1000, blue = 1500, cyan = 2000.
}
\bigskip
\begin{knitrout}
\definecolor{shadecolor}{rgb}{0.969, 0.969, 0.969}\color{fgcolor}
\includegraphics[width=\maxwidth]{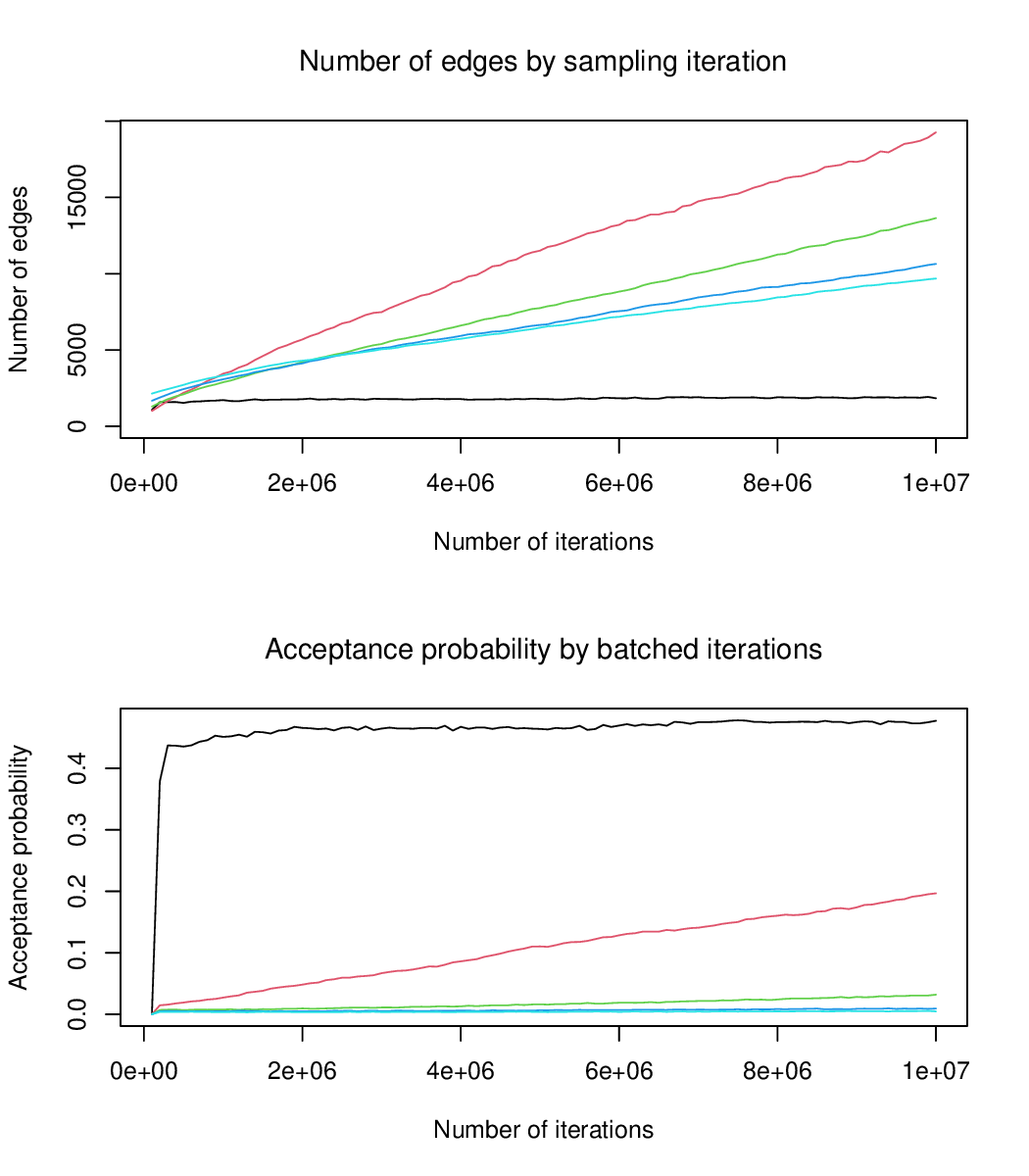} 
\end{knitrout}
\end{figure}

\begin{figure}[htb]
\caption{The 1000000th graph with 100 vertices sampling uniformly over decomposable graphs.
Top left: graph; top right: junction tree; bottom left: Almond tree;
bottom right: Ibarra graph.
\label{eguniform}}
\bigskip
\centerline{
\includegraphics[width=3.0in]{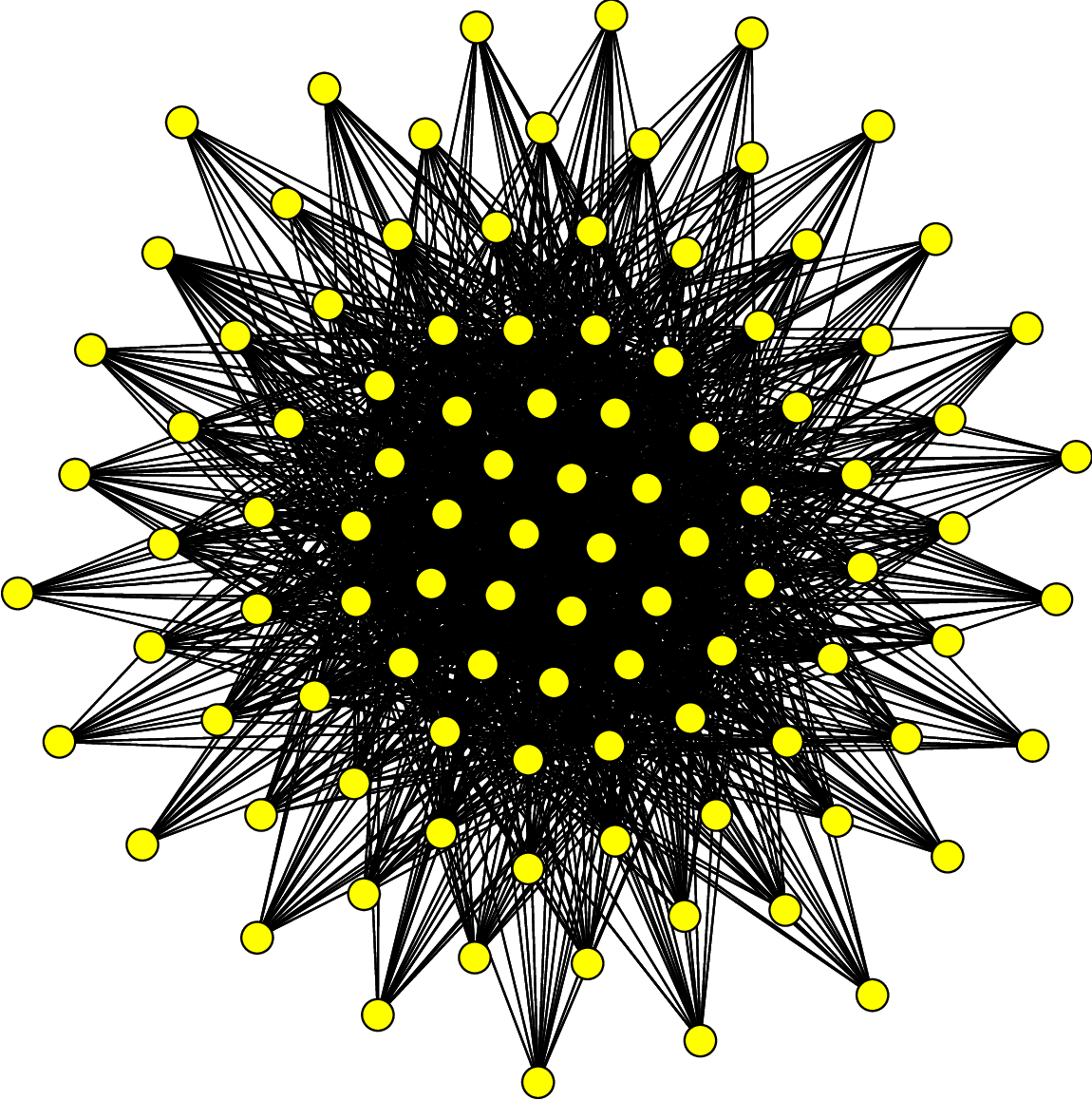}
\includegraphics[width=3.0in]{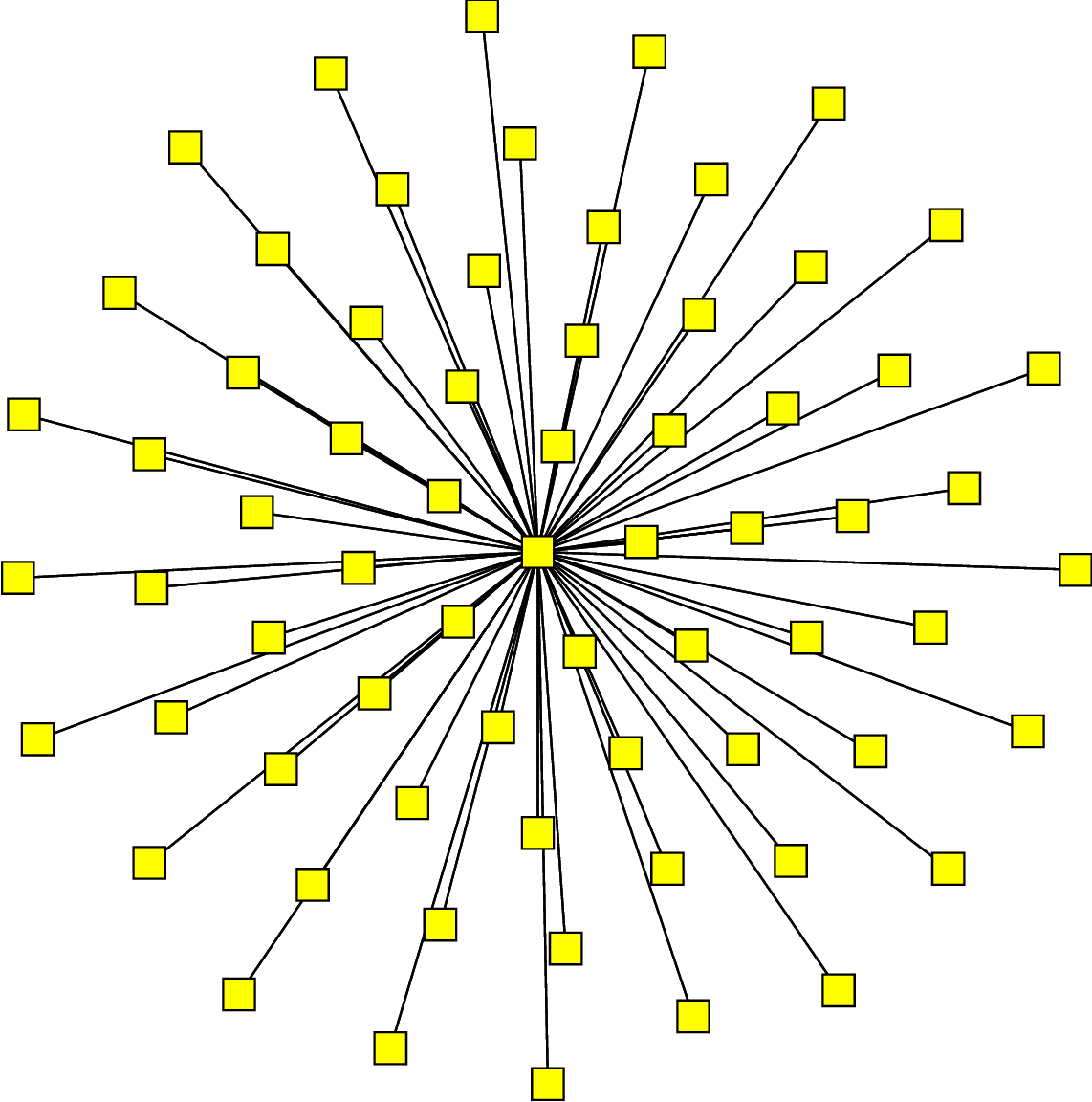}
}
\centerline{
\includegraphics[width=3.0in]{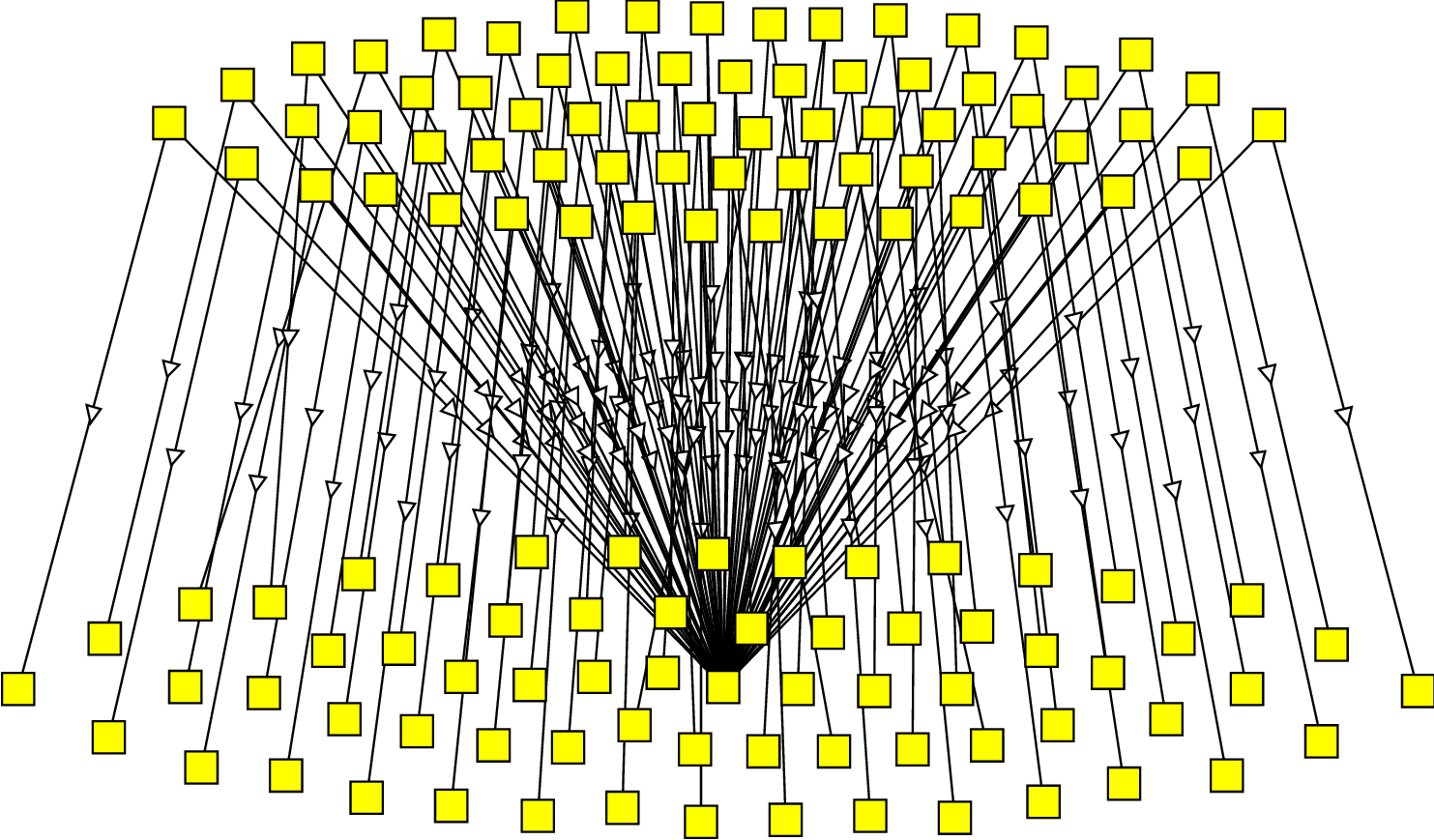}
\includegraphics[width=3.0in]{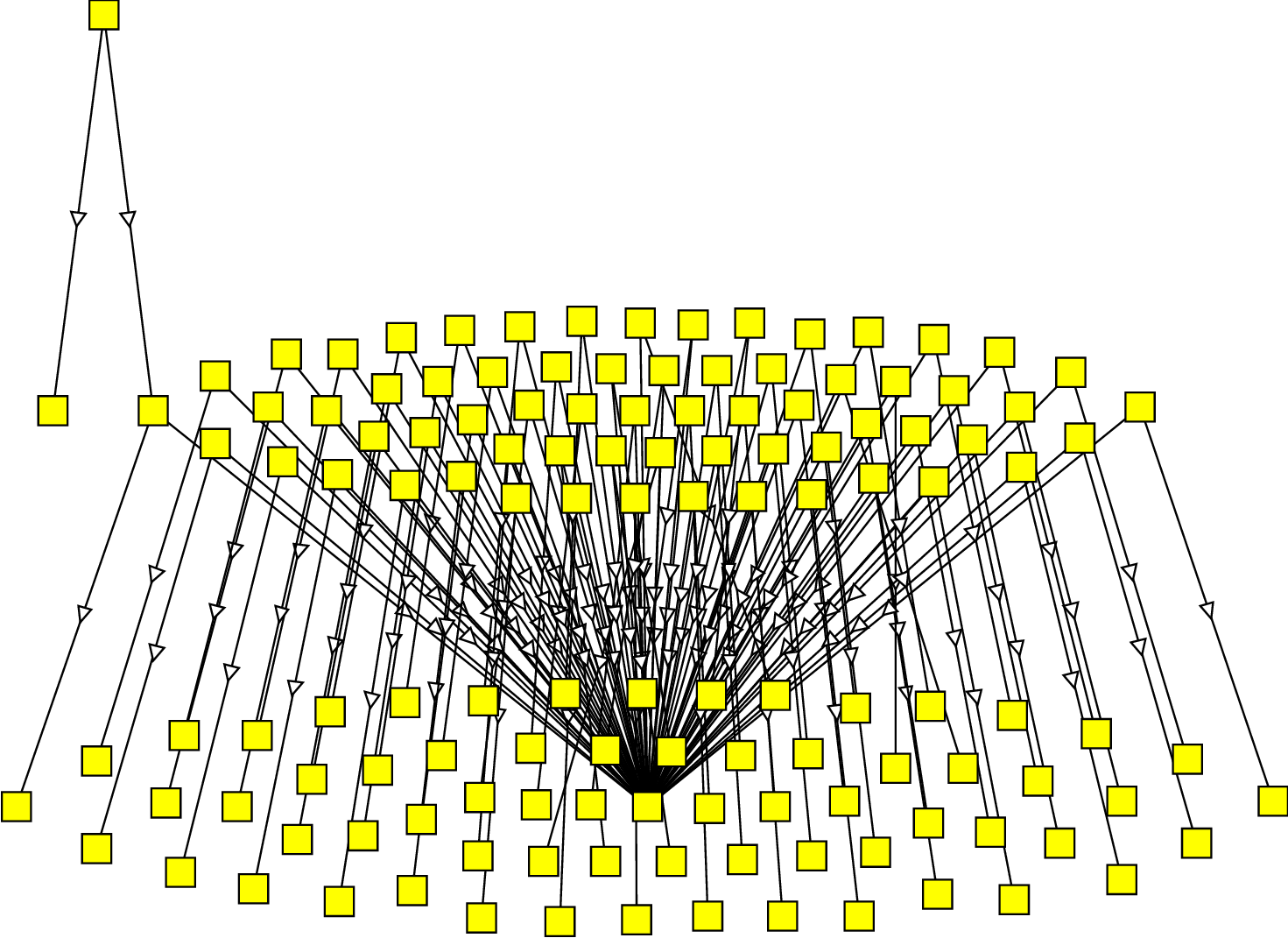}
}
\end{figure}

\begin{figure}[htb]
\caption{
\label{timesmax3}
Computational times by number of vertices when making 
1M and 10M
samples
uniformly over decomposable graphs
with maximum clique size 3
using a variety of graph representations.
Black = the graph itself, red = junction tree,
green = Almond tree, blue = Ibarra graph.
}
\bigskip
\begin{knitrout}
\definecolor{shadecolor}{rgb}{0.969, 0.969, 0.969}\color{fgcolor}
\includegraphics[width=\maxwidth]{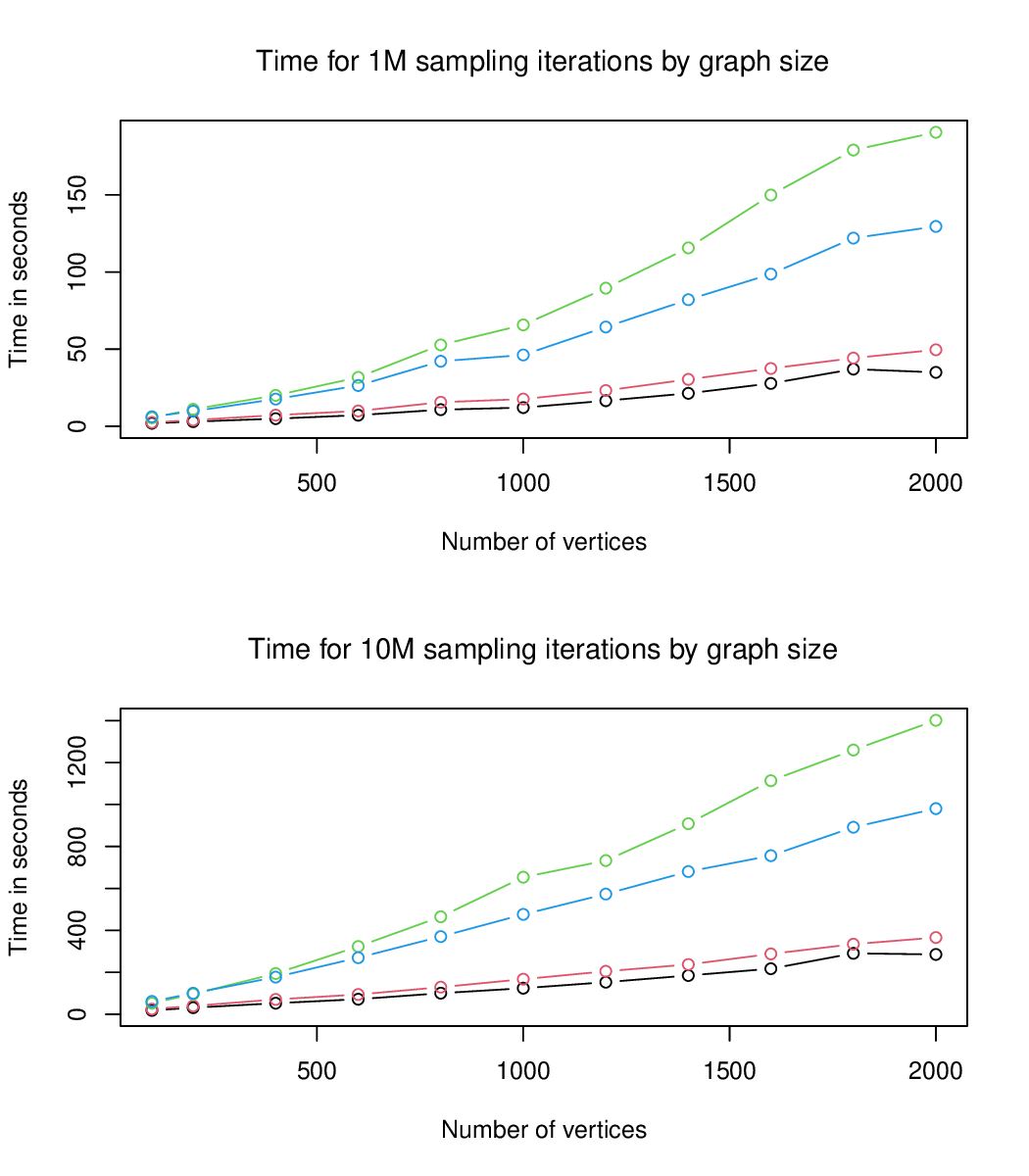} 
\end{knitrout}
\end{figure}

\begin{figure}[htb]
\caption{
\label{edgesmax3}
Plots of the number of edges and acceptance probability
by iteration
when making
10M
samples
uniformly over decomposable graphs with maximum clique size 3
for a variety of graph sizes.
Black = 100 vertices, red = 500, green = 1000, blue = 1500, cyan = 2000.
}
\bigskip
\begin{knitrout}
\definecolor{shadecolor}{rgb}{0.969, 0.969, 0.969}\color{fgcolor}
\includegraphics[width=\maxwidth]{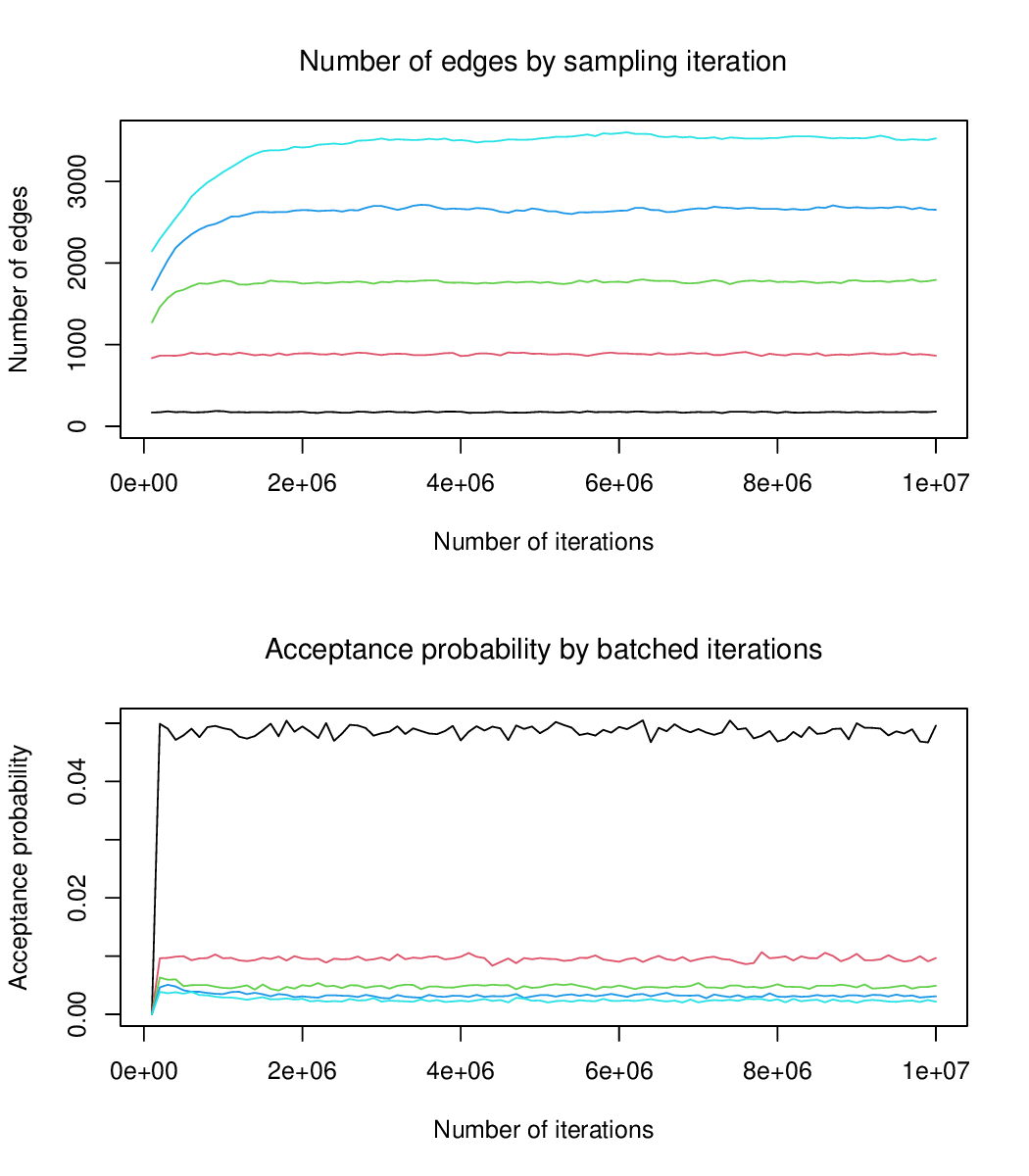} 
\end{knitrout}
\end{figure}

\begin{figure}[htb]
\caption{The 1000000th graph with 100 vertices sampling uniformly over decomposable graphs with maximum clique size 3.
Top left: graph; top right: junction tree; bottom left: Almond tree;
bottom right: Ibarra graph.
\label{egmax3}}
\bigskip
\centerline{
\includegraphics[width=3.0in]{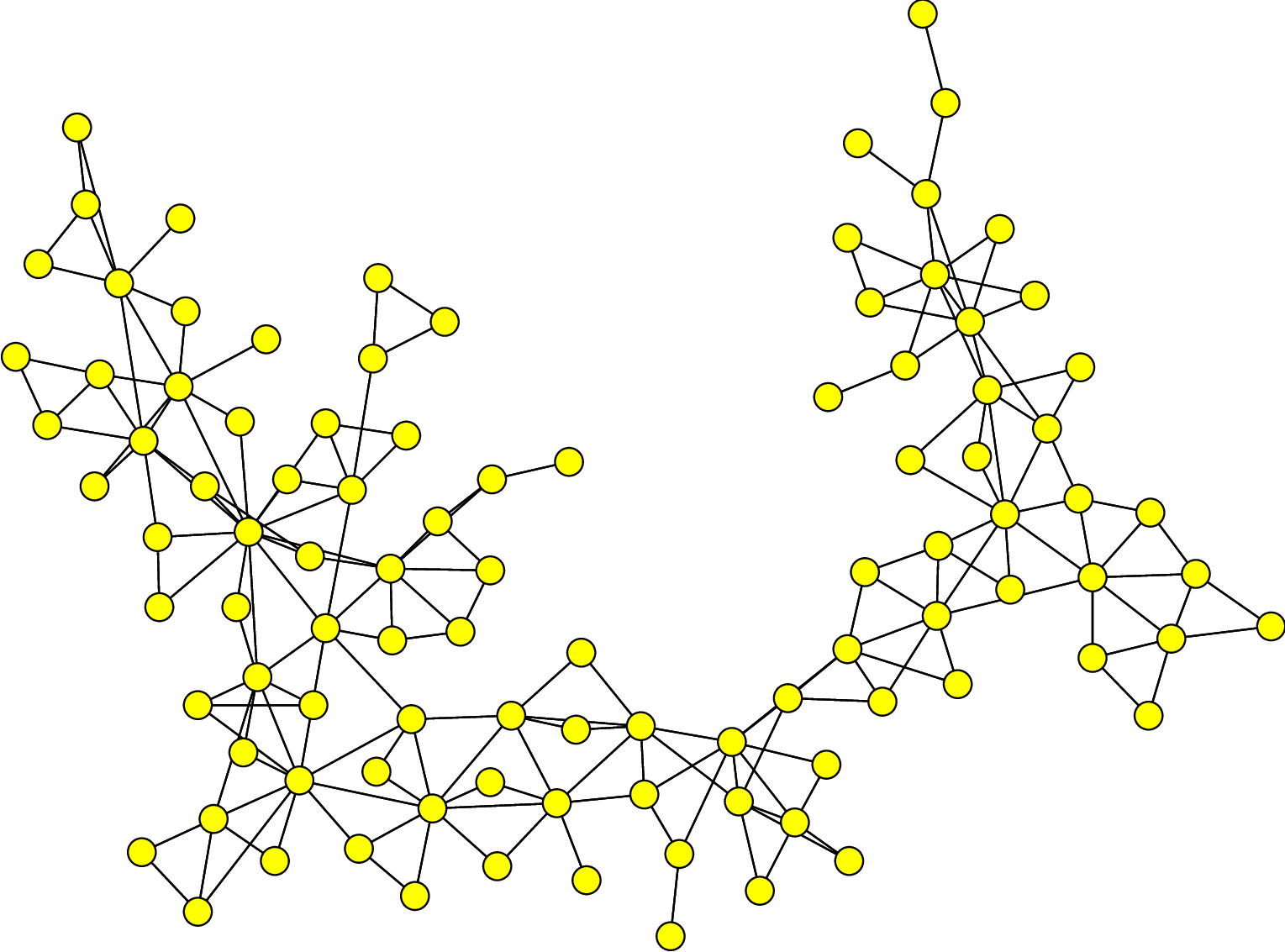}
\includegraphics[width=3.0in]{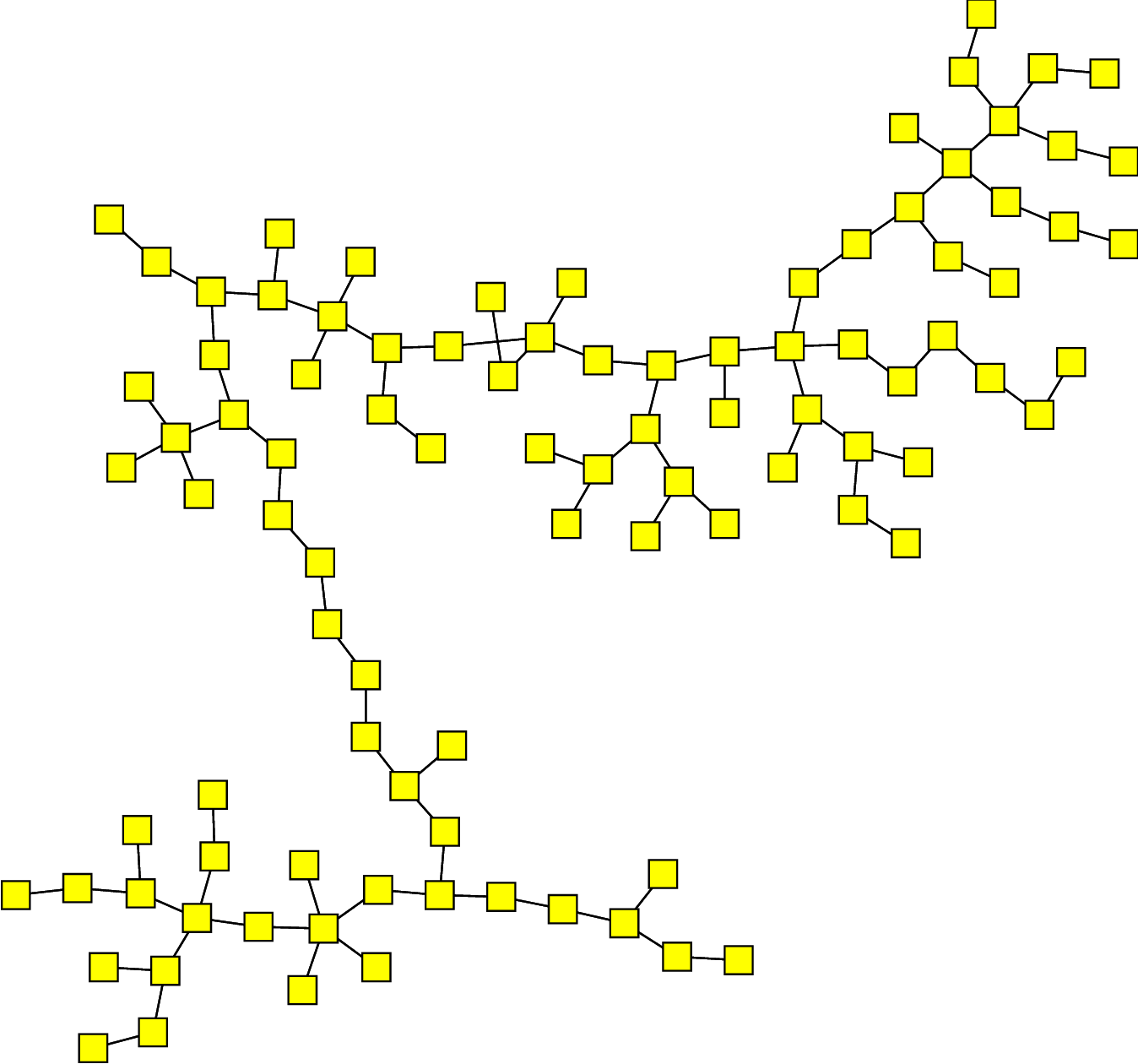}
}
\centerline{
\includegraphics[width=3.0in]{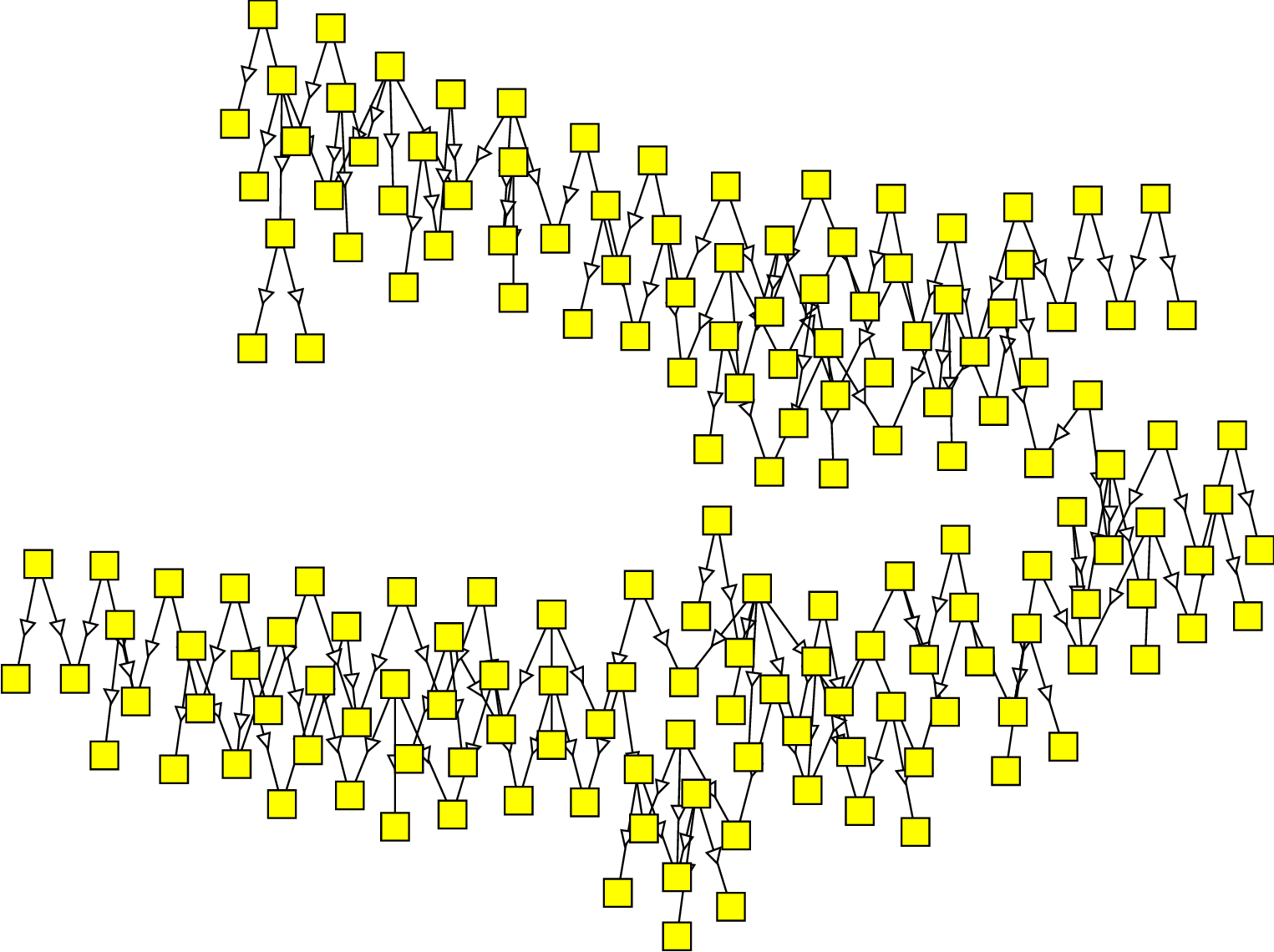}
\includegraphics[width=3.0in]{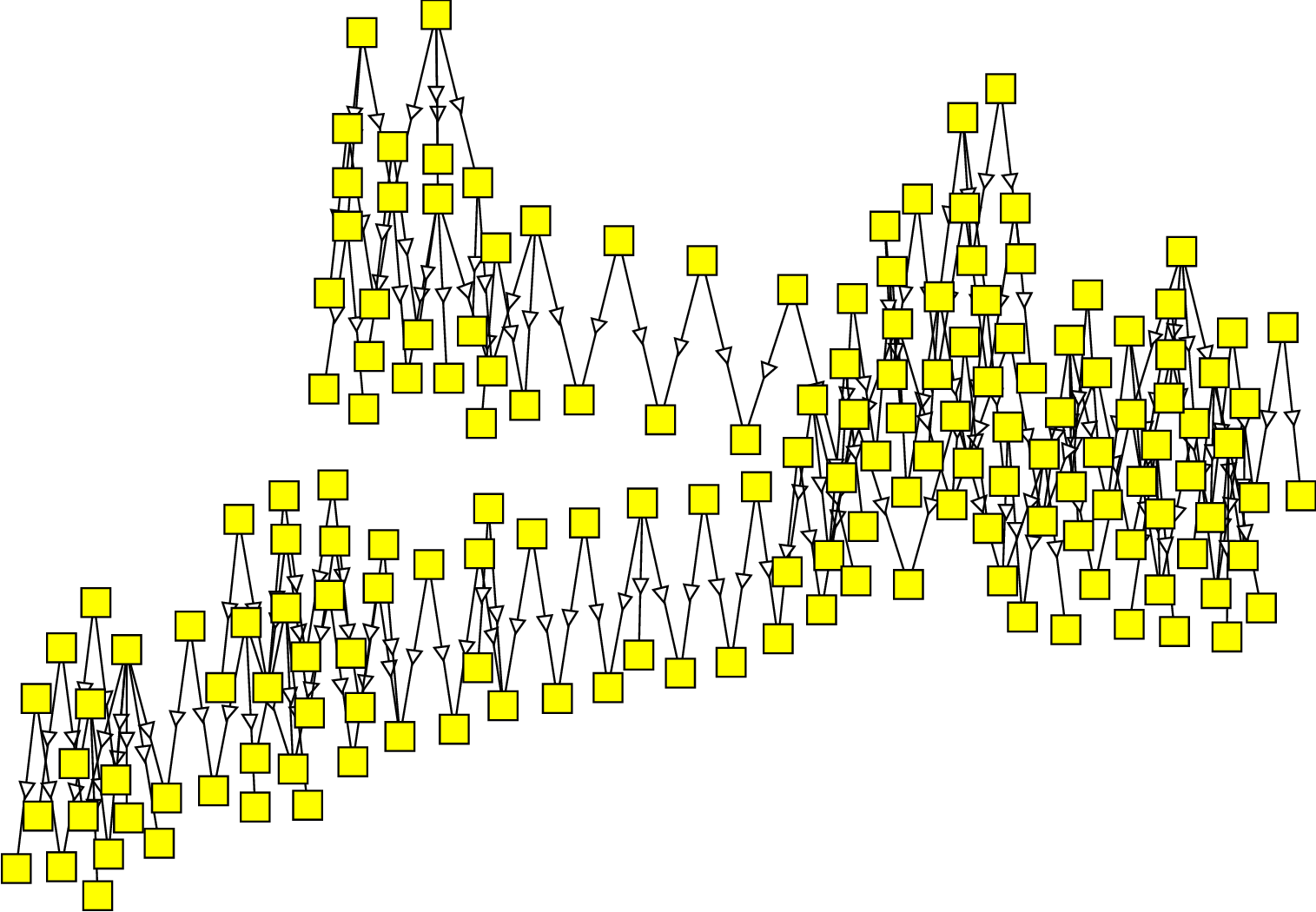}
}
\end{figure}

\begin{figure}[htb]
\caption{
\label{timesalpha1}
Computational times by number of vertices when making
1M and 10M 
samples
over decomposable graphs
with edge penalty 1
using a variety of graph representations.
Black = the graph itself, red = junction tree,
green = Almond tree, blue = Ibarra graph.
}
\bigskip
\begin{knitrout}
\definecolor{shadecolor}{rgb}{0.969, 0.969, 0.969}\color{fgcolor}
\includegraphics[width=\maxwidth]{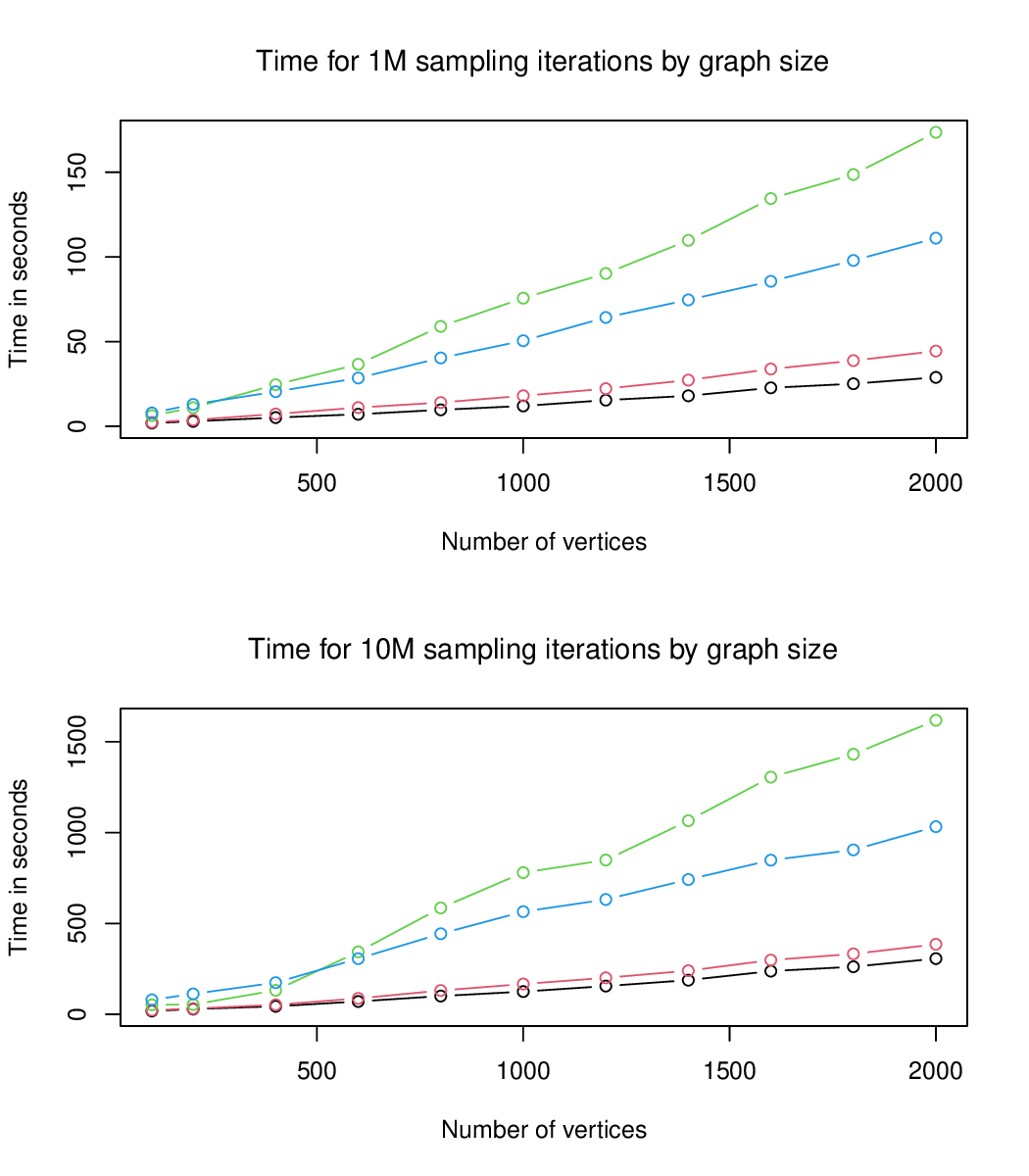} 
\end{knitrout}
\end{figure}

\begin{figure}[htb]
\caption{
\label{edgesa1}
Plots of the number of edges and acceptance probability
by iteration
when making
10M
samples
with edge penalty 1
for a variety of graph sizes.
Black = 100 vertices, red = 500, green = 1000, blue = 1500, cyan = 2000.
}
\bigskip
\begin{knitrout}
\definecolor{shadecolor}{rgb}{0.969, 0.969, 0.969}\color{fgcolor}
\includegraphics[width=\maxwidth]{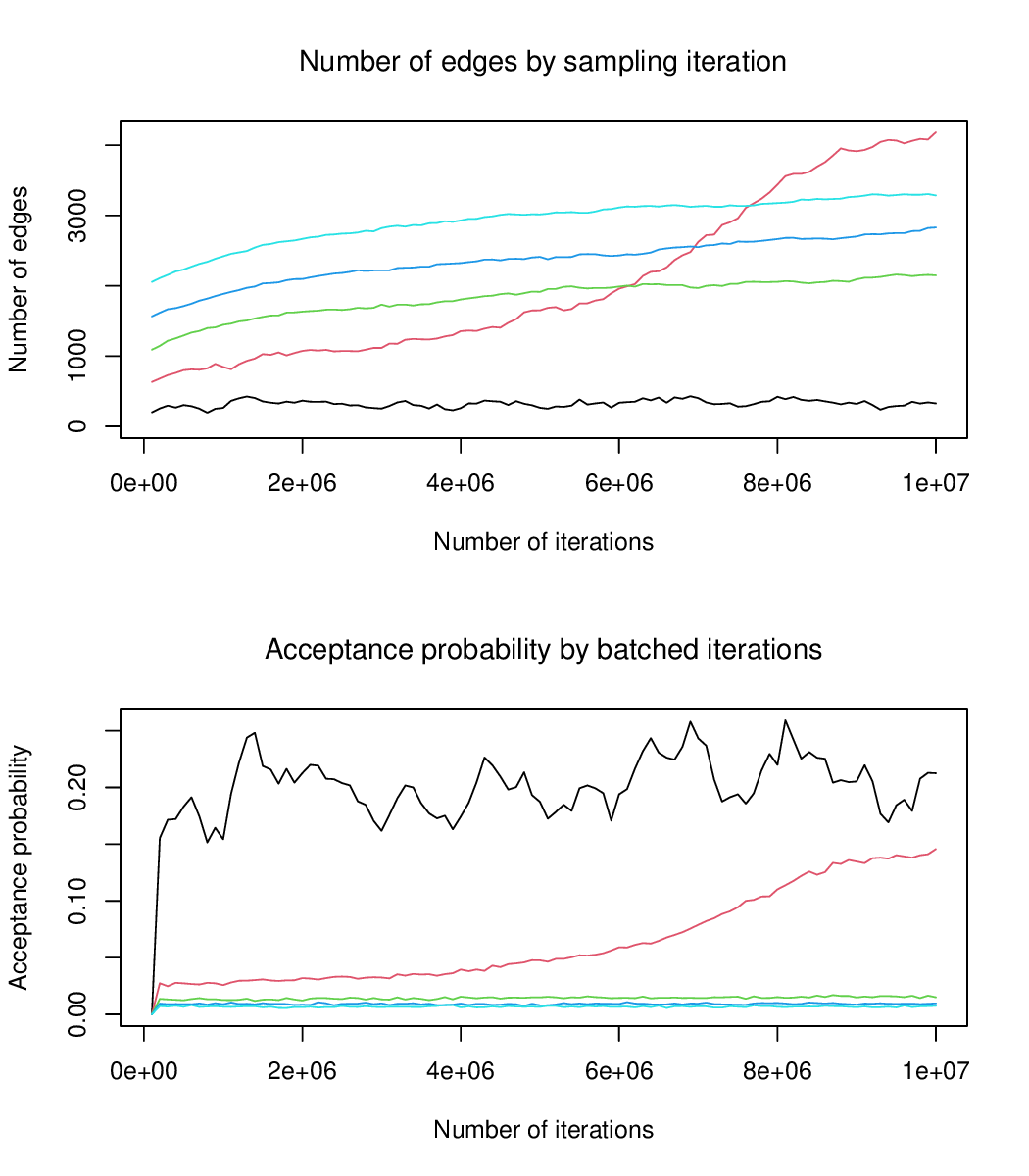} 
\end{knitrout}
\end{figure}

\begin{figure}[htb]
\caption{The 1000000th graph with 100 vertices sampling over decomposable graphs
with edge penalty 1.
Top left: graph; top right: junction tree; bottom left: Almond tree;
bottom right: Ibarra graph.
\label{ega1}}
\bigskip
\centerline{
\includegraphics[width=3.0in]{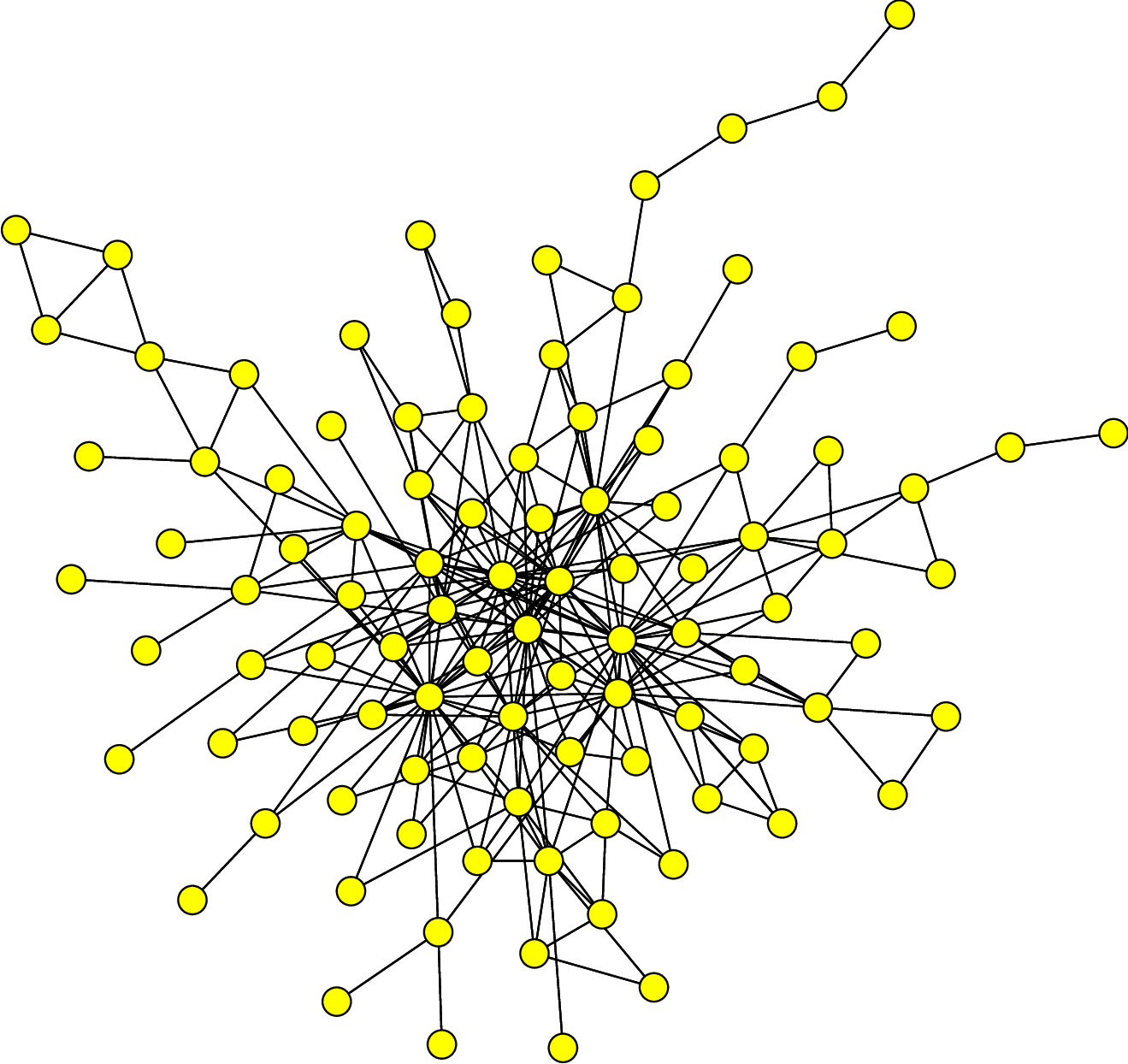}
\includegraphics[width=3.0in]{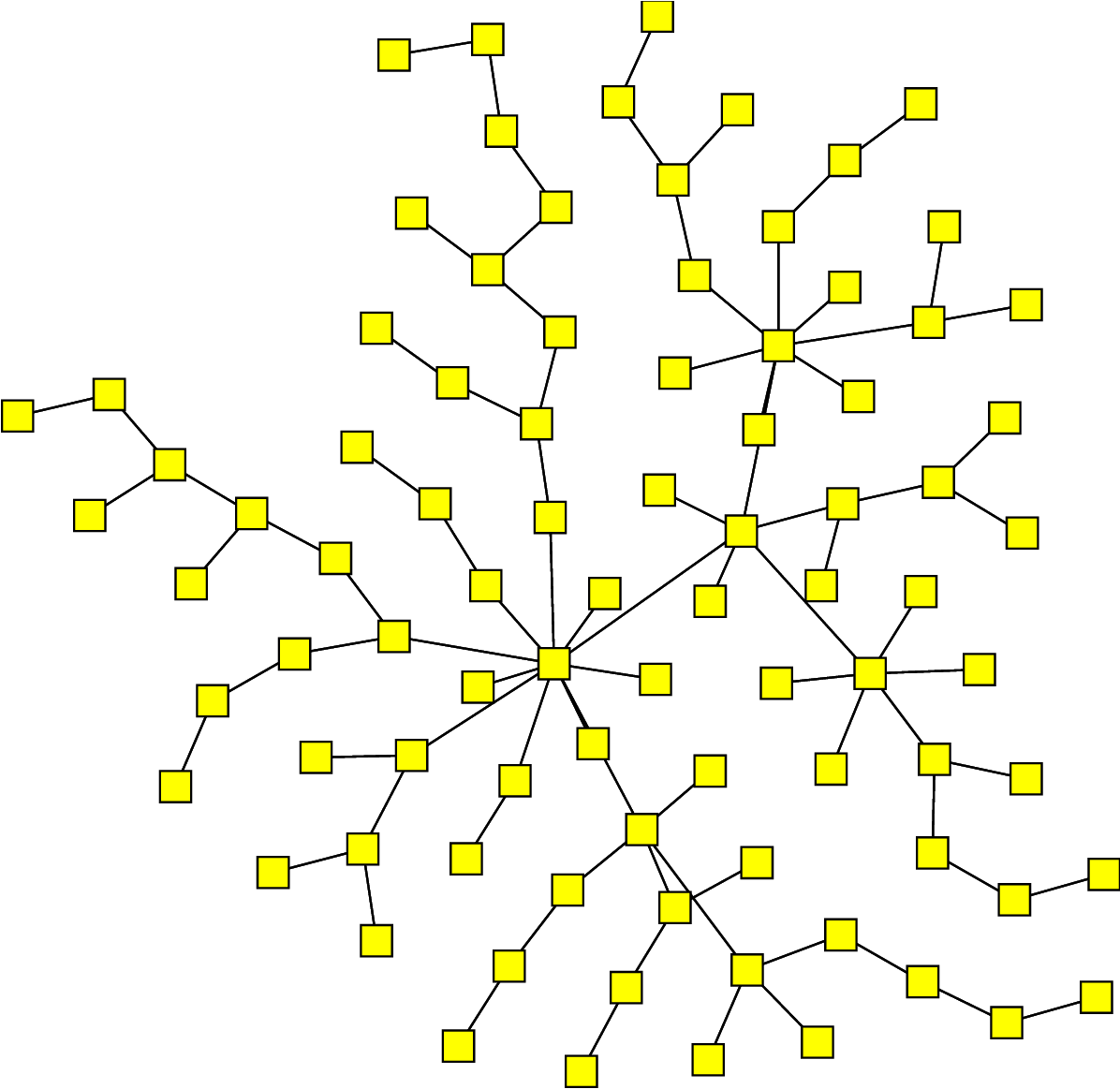}
}
\centerline{
\includegraphics[width=3.0in]{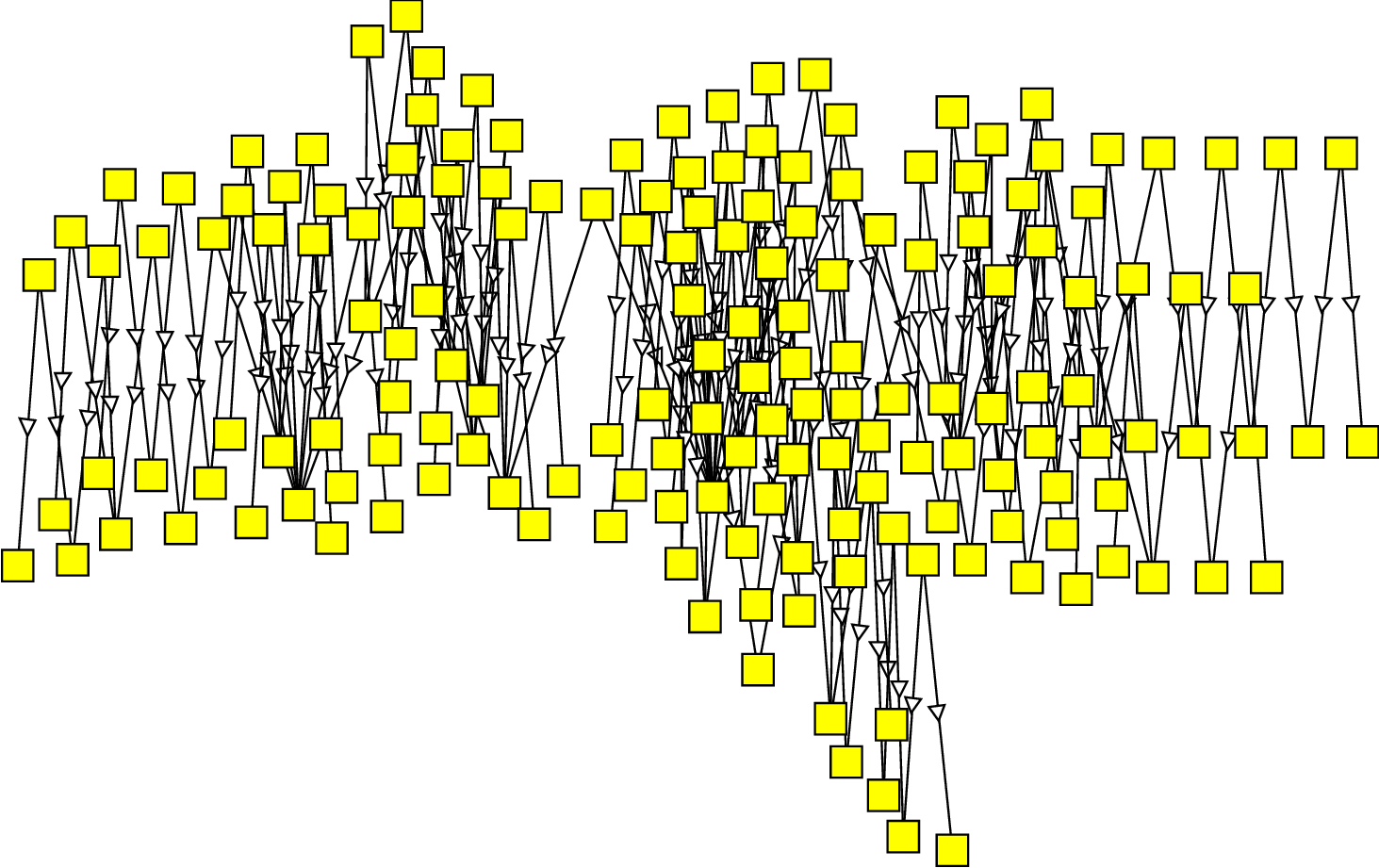}
\includegraphics[width=3.0in]{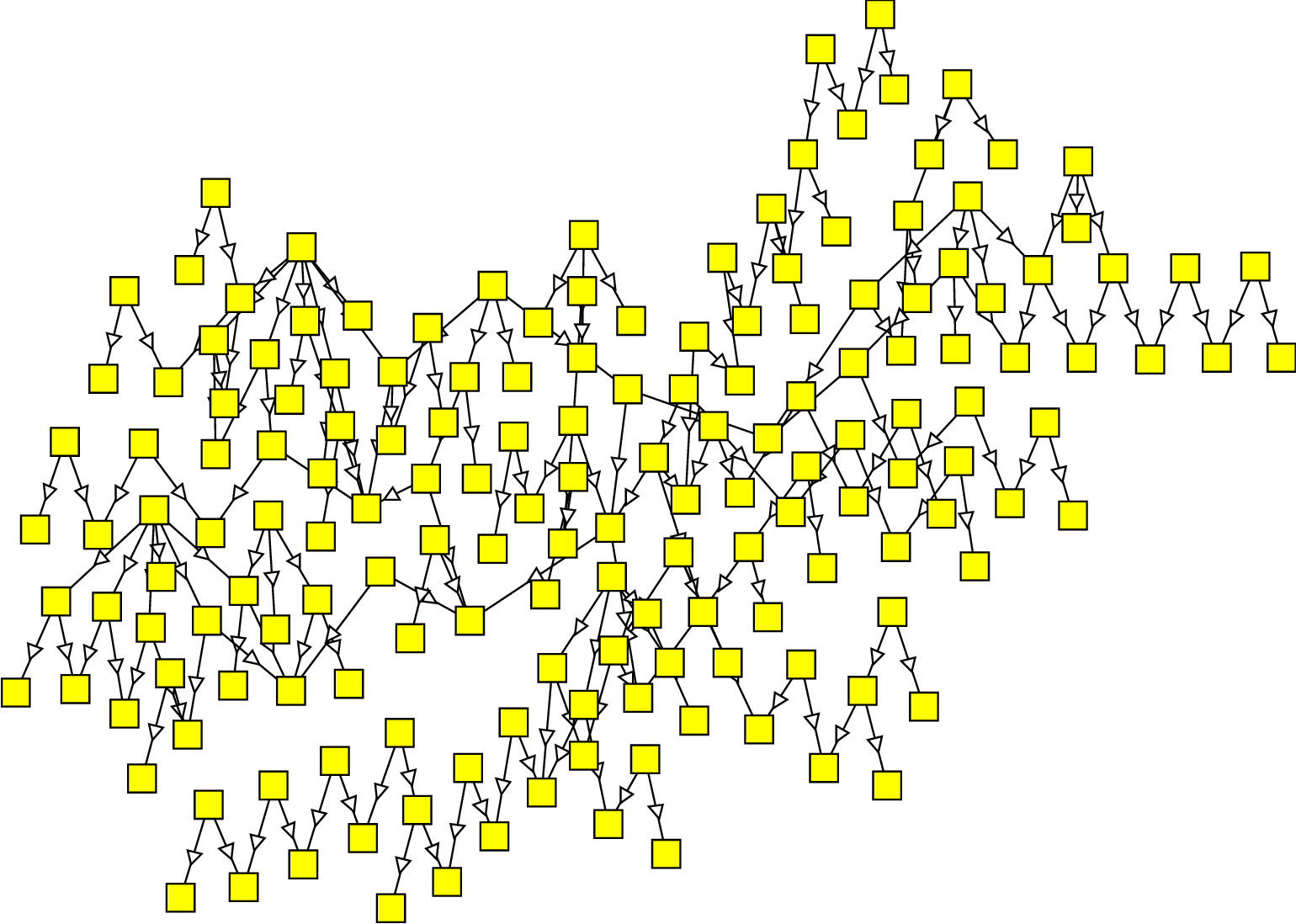}
}
\end{figure}

\begin{figure}[htb]
\caption{
\label{timesalpha2}
Computational times by number of vertices when making
1M and 10M
samples
over decomposable graphs
with edge penalty 2
using a variety of graph representations.
Black = the graph itself, red = junction tree,
green = Almond tree, blue = Ibarra graph.
}
\bigskip
\begin{knitrout}
\definecolor{shadecolor}{rgb}{0.969, 0.969, 0.969}\color{fgcolor}
\includegraphics[width=\maxwidth]{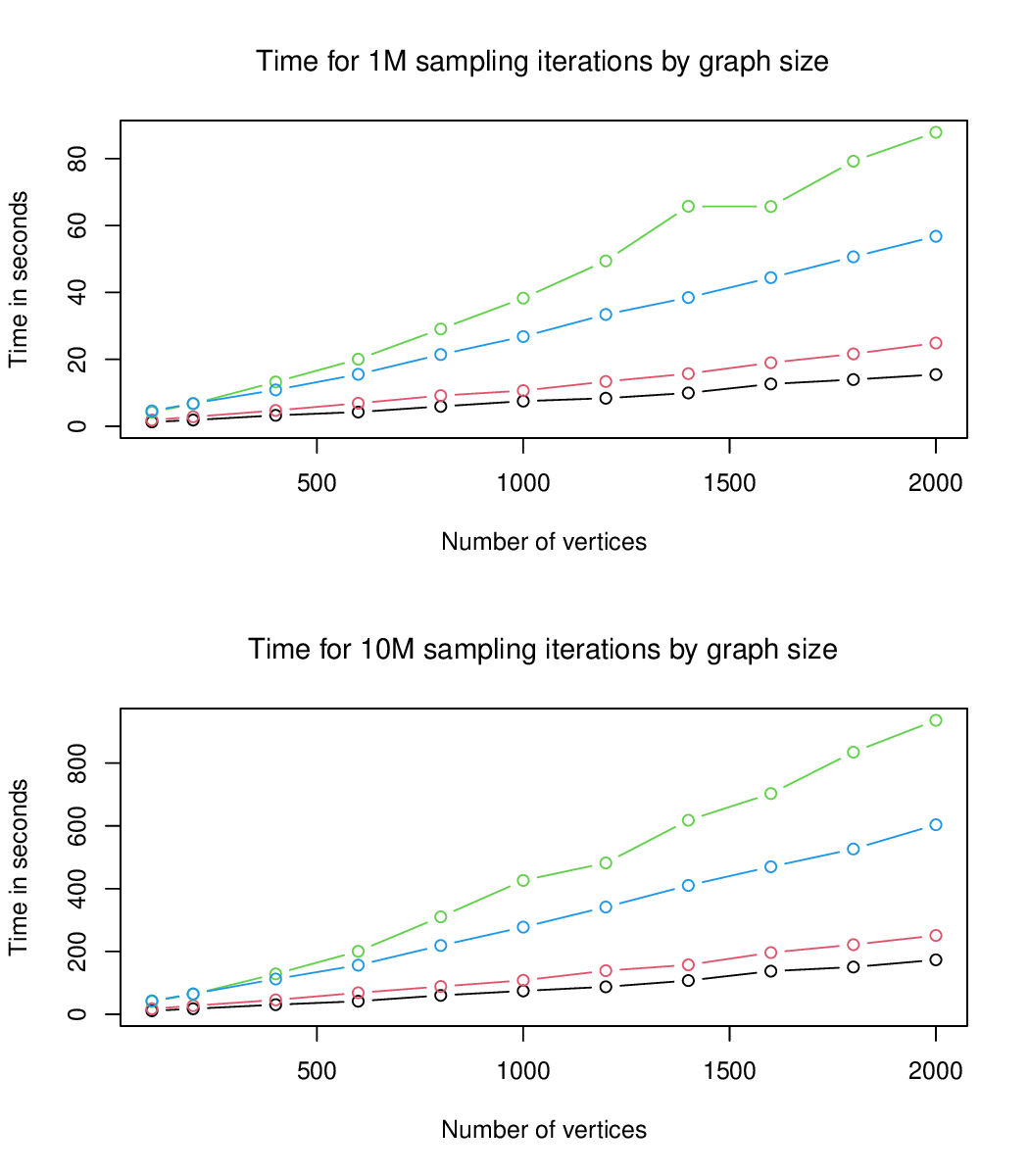} 
\end{knitrout}
\end{figure}

\begin{figure}[htb]
\caption{
\label{edgesa2}
Plots of the number of edges and acceptance probability
by iteration
when making
10M
samples
with edge penalty 1
for a variety of graph sizes.
Black = 100 vertices, red = 500, green = 1000, blue = 1500, cyan = 2000.
}
\bigskip
\begin{knitrout}
\definecolor{shadecolor}{rgb}{0.969, 0.969, 0.969}\color{fgcolor}
\includegraphics[width=\maxwidth]{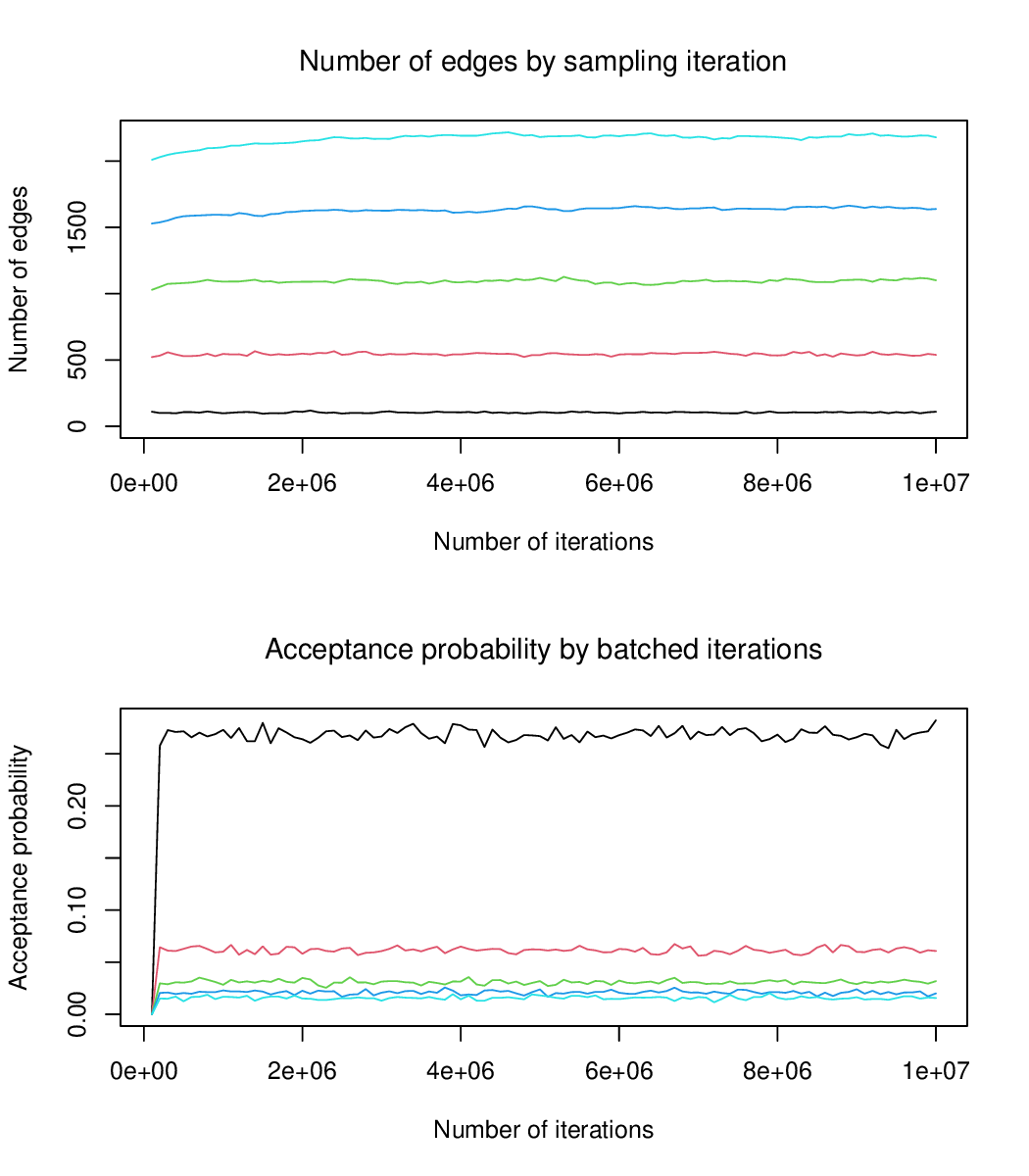} 
\end{knitrout}
\end{figure}

\begin{figure}[htb]
\caption{The 1000000th graph with 100 vertices sampling over decomposable graphs
with edge penalty 2.
Top left: graph; top right: junction tree; bottom left: Almond tree;
bottom right: Ibarra graph.
\label{ega2}}
\bigskip
\centerline{
\includegraphics[width=3.0in]{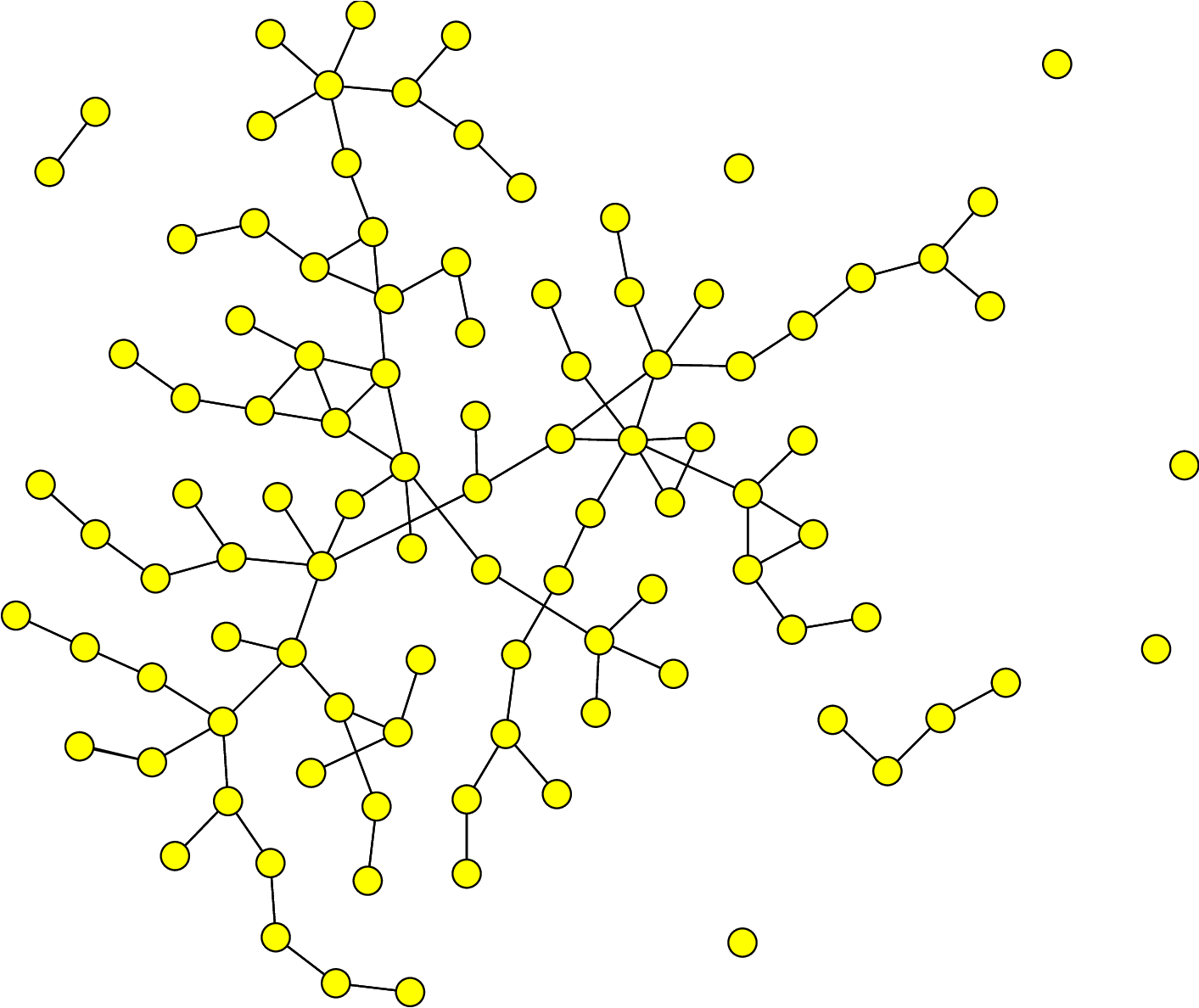}
\includegraphics[width=3.0in]{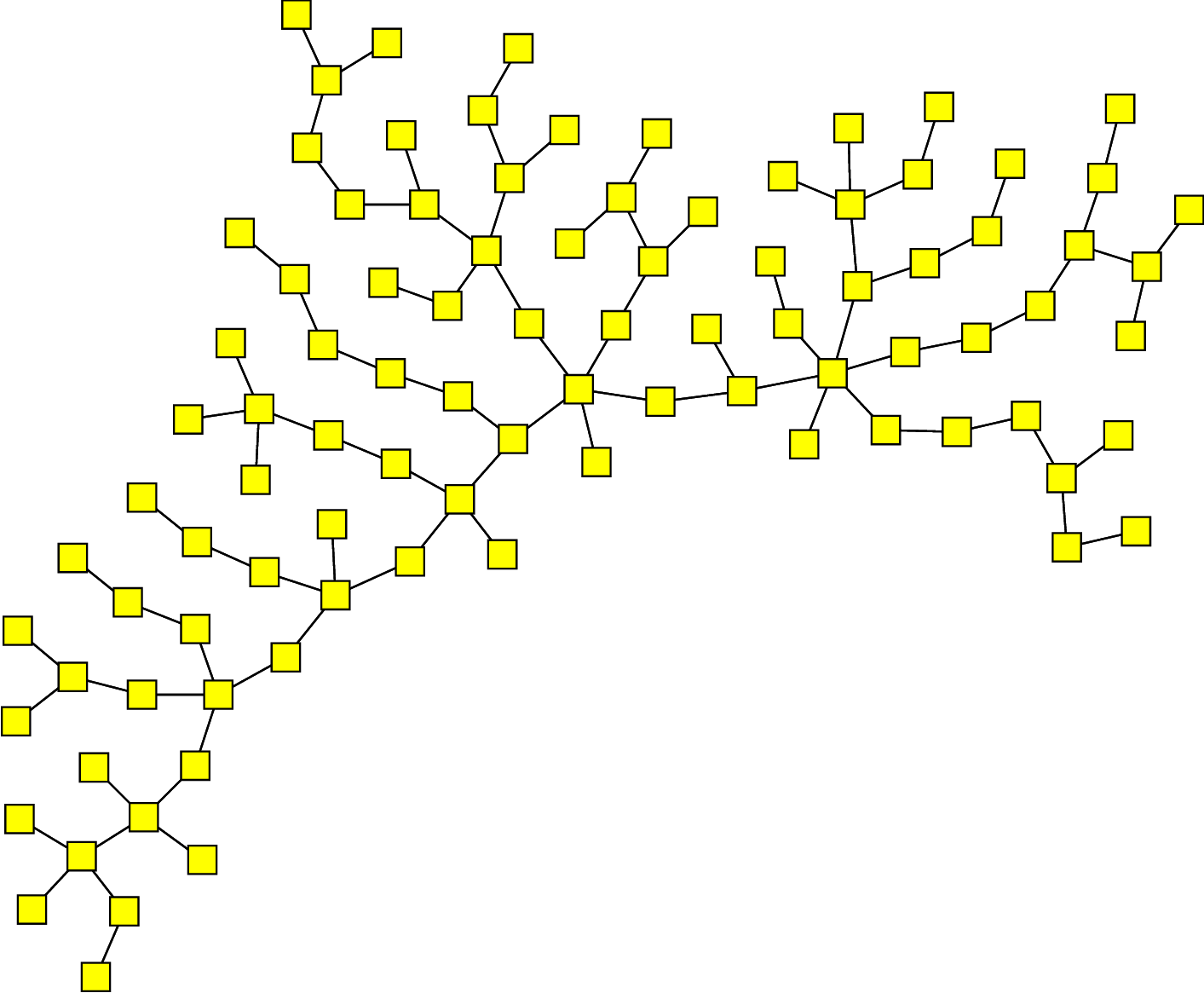}
}
\centerline{
\includegraphics[width=3.0in]{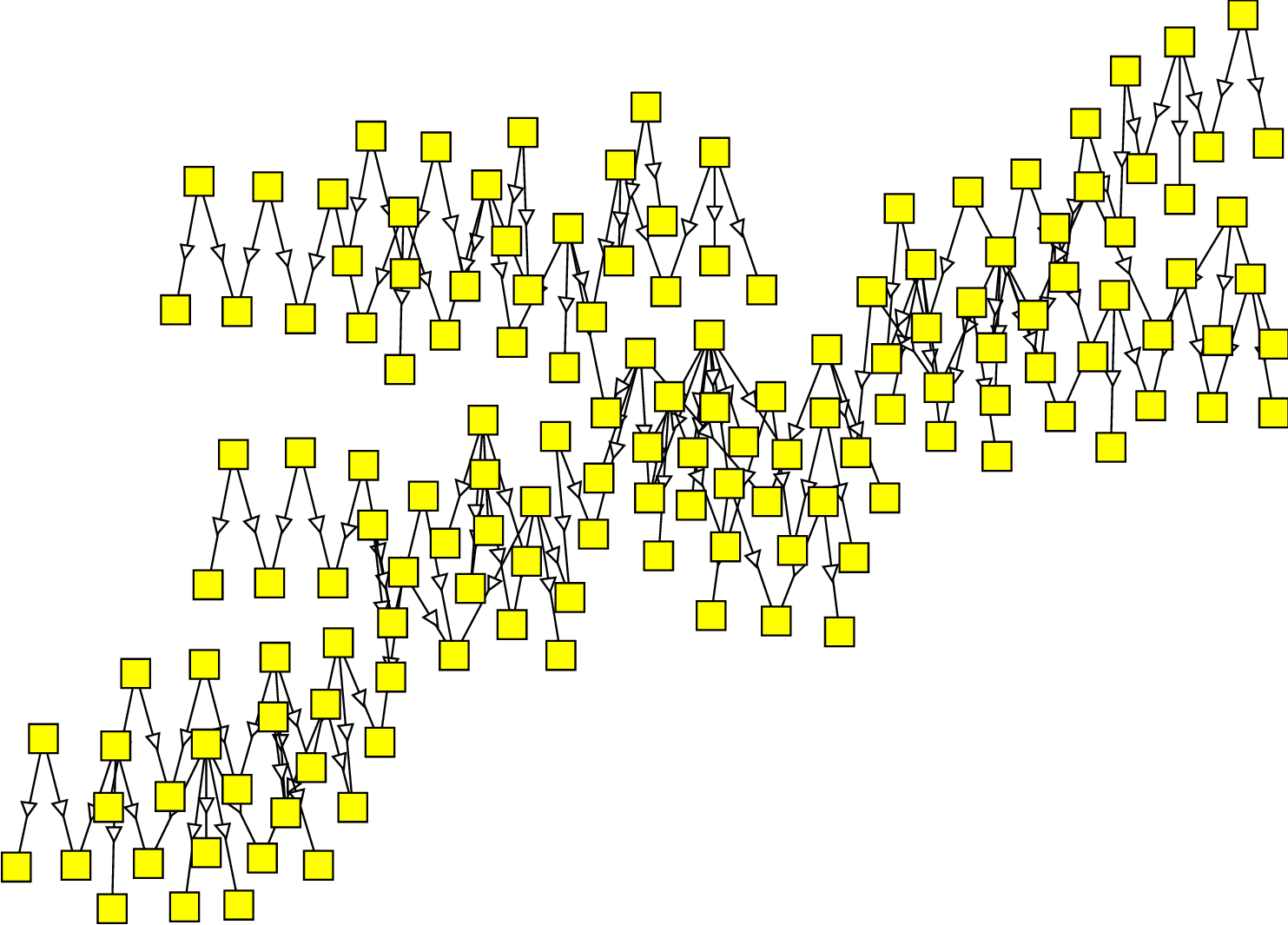}
\includegraphics[width=3.0in]{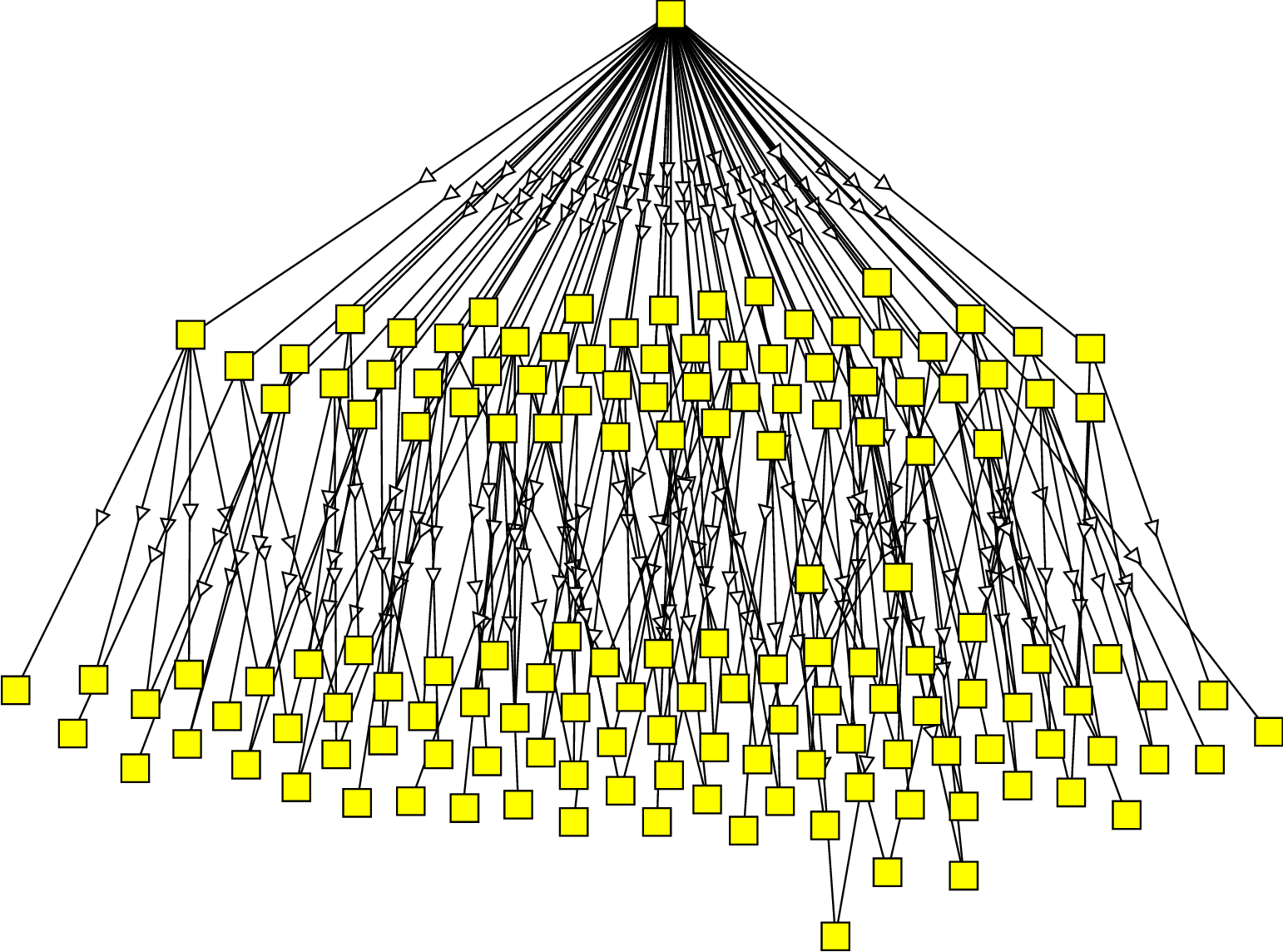}
}
\end{figure}

\clearpage
\section{Discussion}

The first caveat to note is that, obviously, the computational performances 
of the 
different decomposable graph representations depend on the skill, style
and limitations of the programmer and the language and packages used.
Since the Java programs have been made generally available, 
they have been written
more with transparency in mind than for computational optimization, the programs
being close to literal implementations of the algorithms presented here.
Graphs and graph algorithms have been implemented using the same structures
throughout,
essentially collections of adjacency sets represented as hash tables of
hash tables.
The biggest factor to consider here is that the decomposable graph itself 
has simple elements as vertices, whereas the other representations have
vertices which are sets. By default, hash values for sets in Java
are derived from the hash values of their elements, so many queries that
take constant time for the decomposable graph itself may take time linear
in the size of a set for the other representations. 
This can be avoided to a great extent by forcing comparisons of sets
by identity rather than by equality of content, and in other versions
of the code this has been implemented, however, it leads to far less
transparent code that would be more difficult to extend. 

The second note is that the implementations make significant use of
secondary structures or information, and, 
in that sense, are best case scenarios.
In the case of the graph itself, the set of cliques $\C$ and collection
of separators $\S$ are tracked as described in procedure 
\ref{thegraphitself}. Thus, assessing the completeness of $S_{xy}$
is done by querying whether $C_{xy} \in \C$ which is at worst linear 
in $|C_{xy}|$ whereas checking for the presence of each of its 
possible edges in $G$ is quadratic.
Similarly, checking that $S_{xy} \in \S$ can avoid a graph search
to check that it separates
$x$ and $y$ when $S_{xy}$ is not a separator.
Without these savings, the programs take longer, particularly in the
case of sampling uniformly over decomposable graphs which concentrates
on graphs with a few very large cliques.

All the other representations are used in conjunction with the graph itself
and a map from each vertex to a clique that contains it, thus avoiding
list searches to find a containing clique and graph searches to find $S_x$.
The Junction tree also benefits from the pre check that $S_{xy} \in \S$.
Checking that $C_{xy}$ is a clique can be done by querying whether
it is a vertex of any of the set-graphs, and checking that 
$S_{xy}$ is  separator is also just a check that it is a vertex
of either the Almond tree or Ibarra graph.

All representations were run together using the same random proposals
to verify agreement in their assessment of the legality of a
proposal, and by implication, the correctness of the algorithms  given
here.

The primary conclusion from the comparisons
presented above is that there is no overwhelming 
computational reason to use 
any representation other than the graph itself. This approach also has
the attraction of simplicity, given a graph structure that allows
straightforward searching and modification, the algorithm can be 
implemented in very few lines of code. 
However, the Junction tree is in some cases marginally faster
and never much slower and as this representation would benefit far more
from more efficient handling of sets, there is scope for other implementations 
to invert the preference. We should also note that we consider
the graph itself only for model fitting whereas the Junction tree, and
other representations, can be used for message passing forward-backward
algorithms. This strengthens the case for using Junction trees 
if sampling is being used for model averaging any results that require
message passing.

In all cases, for large graphs, Ibarra graphs out perform Almond trees. 
This is likely to be because, in Ibarra graphs, 
several searches, for example finding supersets, require 
following edges in one direction only, whereas more extensive searches
are needed in Almond trees.

A final point to note is that the proportion of proposed perturbations
that yield decomposable graphs decreases rapidly with the size of the
graph dropping to well below 1\% for the largest graphs tried here.
Thus, if any of the representations could be exploited to propose
only legal perturbations, or at least reduce the rejection rate,
this would greatly strengthen the case for their use, and because
of the one-to-one relationship with decomposable graphs, the Ibarra
graph would be prime candidate for investigating this.

\clearpage
\bibliography{collection,alun}
\end{document}